\newcommand\F{\ensuremath{\mathcal{F}}}
\newcommand\B{\ensuremath{\mathcal{B}}}
\newcommand\D{\ensuremath{\mathcal{D}}}
\newcommand\ORD{\hbox{\sc Ord}}
\newcommand\BORD{\hbox{\sc BiOrd}}
\newcommand\DORD{\hbox{\sc DiOrd}}
\begin{document}

\title{Ordering without forbidden patterns}

\author{Pavol Hell\inst{1}
\and Bojan Mohar\inst{1} \and Arash Rafiey\inst{1,2} }

\institute{Simon Fraser University, Burnaby, Canada,
\email{pavol,mohar,arashr@sfu.ca}\thanks{supported by NSERC
Canada}  \and  
Indiana State University, Indiana, USA}

%
%
\date{}
\maketitle

\begin{abstract}
Let $\F$ be a set of ordered patterns, i.e., graphs whose vertices
are linearly ordered. An $\F$-free ordering of the vertices of a
graph $H$ is a linear ordering of $V(H)$ such that none of the
patterns in $\F$ occurs as an induced ordered subgraph. We denote
by $\ORD(\F)$ the decision problem asking whether an input graph
admits an $\F$-free ordering; we also use $\ORD(\F)$ to denote the
class of graphs that do admit an $\F$-free ordering. It was
observed by Damaschke (and others) that many natural graph classes
can be described as $\ORD(\F)$ for sets $\F$ of small patterns
(with three or four vertices). This includes bipartite graphs,
split graphs, interval graphs, proper interval graphs, cographs,
comparability graphs, chordal graphs, strongly chordal graphs, and
so on. Damaschke also noted that for many sets $\F$ consisting of
patterns with three vertices, $\ORD(\F)$ is polynomial-time
solvable by known algorithms or their simple modifications. We
complete the picture by proving that {\em all} these problems can
be solved in polynomial time. In fact, we provide a single master
algorithm, i.e., we solve in polynomial time the problem $\ORD_3$
in which the input is a set $\F$ of patterns with at most three
vertices and a graph $H$, and the problem is to decide whether or
not $H$ admits an $\F$-free ordering of the vertices. Our
algorithm certifies non-membership by a forbidden substructure,
and thus provides a single forbidden structure characterization
for all the graph classes described by some $\ORD(\F)$ with $\F$
consisting of patterns with at most three vertices. This includes
bipartite graphs, split graphs, interval graphs, proper interval
graphs, chordal graphs, and comparability graphs. Many of the
problems $\ORD(\F)$ with $\F$ consisting of larger patterns have
been shown to be NP-complete by Duffus, Ginn, and R\" odl, and we add
two simple examples.

We also discuss a bipartite version of the problem, $\BORD(\F)$,
in which the input is a bipartite graph $H$ with a fixed
bipartition of the vertices, and we are given a set $\F$ of
bipartite patterns. We give a unified polynomial-time algorithm
for all problems $\BORD(\F)$ where $\F$ has at most four vertices,
i.e., we solve the analogous problem $\BORD_4$. This is also a
certifying algorithm, and it yields a unified forbidden
substructure characterization for all bipartite graph classes
described by some $\BORD(\F)$ with $\F$ consisting of bipartite
patterns with at most four vertices. This includes chordal
bipartite graphs, co-circular-arc bipartite graphs, and bipartite
permutation graphs. We also describe some examples of digraph
ordering problems and algorithms.

We conjecture that for every set $\F$ of forbidden patterns,
$\ORD(\F)$ is either polynomial or NP-complete.
\end{abstract}

\newpage

\section{Problem definition and motivation}

For every positive integer $k$ we write $[k]=\{1,2,\dots,k\}$,
$E_k =\{\{i,j\} \mid i,j \in [k], i \ne j\}$, and $\F_k =
2^{E_k}$. Each element in $\F_k$ can be viewed as a labelled graph
on vertex set $[k]$ and is called a {\em pattern} of {\em order}
$k$, or simply a {\em $k$-pattern}. Given an input graph $H$ and a
linear ordering $<$ of its vertices, we say that a pattern $F \in
\F_k$ {\em occurs} in $H$ (under the ordering $<$) if $H$ contains
vertices $v_1 < v_2 < \cdots < v_k$ such that the induced ordered
subgraph on these vertices is isomorphic to $F$, i.e., for every
$i,j\in [k]$, $v_iv_j\in E(H)$ if and only if $\{i,j\}\in F$. For
convenience, we shall henceforth write $ij$ to simplify notation
for unordered pairs $\{i,j\}$.

For a set $\F \subseteq \F_k$ we say that a linear ordering $<$ of
$V(H)$ is {\em $\F$-free} if none of the patterns in $\F$ occurs
in $<$. The problem $\ORD(\F)$ asks whether or not the input graph
$H$ has an $\F$-free ordering. We also consider the problem
$\ORD_k$ that asks, for an input $\F \subseteq \F_k$ and a graph
$H$, whether or not $H$ has an $\F$-free ordering.

The problems $\ORD(\F)$ can be viewed as 2-satisfiability problems
with additional ordering constraints, or as special ternary constraint
satisfaction problems. In neither case there are general algorithms
known for such problems, cf. also \cite{gutt}.

The problems $\ORD(\F)$ have been studied by Damaschke
\cite{dama}, Duffus, Ginn, and R\" odl \cite{dgr}, and others. In
particular, Damaschke lists many graph classes that can be
equivalently described as $\ORD(\F)$. For example
\cite{spinrad-lee}, it is well known that a graph $H$ is chordal
if and only if it admits an $\F$-free ordering for $\F$ consisting
of the single pattern $\{12,13\}$, and $H$ is an interval graph if
and only if it admits an $\F$-free ordering for $\F$ consisting of
the pattern $\{ \{13\}, \{13,23\} \}$.
Similar sets of patterns from $\F_3$ describe proper interval
graphs, bipartite graphs, split graphs, and comparability graphs
\cite{dama}. With patterns from $\F_4$ we can additionally
describe strongly chordal graphs \cite{martin}, circular-arc
graphs \cite{jayme}, and many other graph classes.

Analogous definitions apply to bipartite graphs: a {\em bipartite
pattern of order $k$} is a bipartite graph whose vertices in each
part of the bipartition are labelled by elements of $[\ell]$
respectively $[\ell']$, with $\ell+\ell' \le k$. We again denote
by $\B_k$ the set of all bipartite patterns of order $k$. The
problem $\BORD(\F)$ for a fixed $\F \subseteq \B_k$ asks whether
or not an input bipartite graph $H$ with a given bipartition
$V(H)=U \cup V$ admits an ordering of $U$ and of $V$ so that no
pattern from $\F$ occurs. We also define the corresponding problem
$\BORD_k$ in which both $\F \subseteq B_k$ and $H$ with $V(H)=U \cup V$
are part of the input.

Several known bipartite graph classes can be characterized as
$\BORD(\F)$ for $\F \subseteq \B_4$. For instance, for $\F =
\{11',31' \}$ (here $\ell = 3, \ell' = 1$), the class $\BORD(\F)$ consists
precisely of convex bipartite graphs, and $\F = \{ \{11',12', 21'\}, \{12',21'\},$
$\{12',21',22'\} \}$ (here $\ell = \ell' = 2$) similarly defines bipartite permutation
graphs (a.k.a., proper interval bigraphs) \cite{gregory,spinrad,spinrad1}.
One can similarly obtain the classes of chordal bipartite graphs,
and bipartite co-circular arc bigraphs \cite{hell-huang}.

\subsection*{Summary of our main results}

We show that $\ORD_3$ and $\BORD_4$ are solvable in polynomial
time. In particular, this completes the picture analyzed by
Damaschke \cite{dama}, and proves that all $\ORD(\F)$ with $\F \subseteq
\F_3$ are polynomial-time solvable; similarly, all $\BORD(\F)$
with $\F \subseteq \B_4$ are polynomial-time solvable.

We also discuss digraphs with forbidden patterns on three
vertices, and present two classes of digraphs for which our
algorithm can be deployed to obtain the desired ordering without
forbidden patterns.

We further describe sets $\F \subseteq \F_4$ for which $\ORD(\F)$ is
polynomial time solvable and other sets $\F \subseteq \F_4$ for which
the same problem is NP-complete. Many more NP-complete cases of
$\ORD(\F)$ are presented in \cite{dgr}; in particular, the authors
of \cite{dgr} conjecture that any $\F$ consisting of a single
2-connected pattern (other than a complete graph) yields an
NP-complete $\ORD(\F)$.

Our master algorithm for $\ORD_3$ provides a unified approach
to all recognition problems for classes $\ORD(\F)$ with $\F \subseteq
\F_3$, including all the well known graph classes mentioned earlier.
Our algorithm is a certifying algorithm, and so it also provides a
unified obstruction characterization for all these graph classes.
(We note that these graphs have different ad-hoc obstruction
characterizations \cite{golumbic}.) A similar situation occurs
with $\BORD(\F)$ with $\F \subseteq \B_4$ and classes characterized as
$\BORD(\F)$ with $\F \subseteq \B_4$, including the well known
classes of bipartite graphs mentioned earlier. We note that these
special graph classes received much attention in the past; efficient
recognition algorithms and structural characterizations can be found
in \cite{booth,tizina,corneil,fulkerson,habib,tarjan,spinrad,trotter} and
elsewhere, cf. \cite{spinrad-lee,golumbic}.

The algorithms use a novel idea of an auxiliary digraph. We
believe this will be useful in other situations, and we have used
similar digraphs in \cite{adjust-interval,arash}. The algorithm
for $\ORD_3$ runs in time $O(n^3)$ where $n$ is the number of
vertices of $H$ and in several cases (when the family $\F$ is
particularly nice) it runs in time $O(nm)$, where $m$ is the
number of edges of $H$. The algorithm for $\BORD_4$ runs in time
$O(n^4)$ and in several cases in time $O(n^2 m)$.  We note that
many of the special cases have recognition algorithms that are
$O(m+n)$, so we are definitely paying a price for having a unified
algorithm; we note that the auxiliary digraph we use has
$\Omega(nm)$ edges, so this technique is not likely to produce a
linear time unified algorithm.

We conjecture that for every set $\F$ of forbidden patterns,
$\ORD(\F)$ is either polynomial or NP-complete and provide some
additional evidence for this dichotomy.

A preliminary version of this paper, without most of the proofs
has appeared in ESA 2014 \cite{esa}.

\section{Algorithm for \ORD$_3$ on undirected graphs }\label{3pattern}

Consider an input graph $H$ and a set of patterns $\F \subseteq
\F_3$. Note that $\F$ imposes a constraint on any three vertices
$x,y,z$ of $H$. This means that whenever $(x,y,z)$ induce a
subgraph isomorphic to a pattern from $\F$ and $x$ is before $y$
then $z$ must not be after $y$.

We first construct an auxiliary digraph $H^+$, which we call a
{\em constraint digraph\/}. The vertex set of $H^+$ consists of
the ordered pairs $(x,y)\in V(H)\times V(H)$, $x \ne y$, and the
arcs of $H^+$ are defined as follows. There is an arc from $(x,y)$
to $(z,y)$ and an arc from $(y,z)$ to $(y,x)$ whenever the
vertices $x,y,z$ ordered as $x < y < z$ induce a forbidden pattern
in $\F$. We say that a pair $(x,y)$ {\em dominates} $(x',y')$ and
we write $(x,y) \rightarrow (x',y')$ if there is an arc from
$(x,y)$ to $(x',y')$ in $H^+$.

Consider a strong component $S$ of $H^+$. The dual component
$\overline{S}$ of $S$ consists of all the pairs $(y,x)$ where
$(x,y) \in S$. Note that if $(x,y) \rightarrow (x,z)$, then $(z,x)
\rightarrow (y,x)$.

There are two operations that appear naturally when dealing with
orderings and forbidden patterns \cite{dama}. If we replace each
pattern in $\F$ with its {\em complement} (change edges to
nonedges and vice versa), thus obtaining a set $\overline{\F}$,
then a linear ordering of $V(H)$ is $\F$-free for $H$ if and only
if it is $\overline{\F}$-free for the complementary graph
$\overline H$. Another equivalence is obtained by replacing $\F$
with patterns that represent the same induced subgraphs but with
the reversed order, e.g., replacing $\{12,13\}$ by $\{32,31\}$.
Then a linear ordering will be $\F$-free if and only if the
reverse ordering will be free of the reversed patterns. We will
rely on these two properties in some of our proofs.

In general, the structure of the digraph $H^+$ depends on the
patterns. It is easy to see that if $\{12,23\}$ or $\{13\}$ is the
only forbidden pattern in $\F\subset \F_3$, then $(u,v)(u',v')$ is
an arc of $H^+$ if and only if $(u',v')(u,v)$ is an arc of $H^+$,
i.e. $(u,v)(u',v')$ is a symmetric arc of $H^+$ and hence $H^+$ is
a graph. On the other hand, if $\{12,13\}$ is the only forbidden
pattern in $\F$, then $H^+$ is a digraph without digons and if
$(u,v)(u',v')$ is an arc, then $(u',v')(u,v)$ is not an arc of
$H^+$.

If all pairs $(x_0,x_1),(x_1,x_2),...,(x_{n-1},x_n),\allowbreak
(x_n,x_0)$, $n\ge1$, are in the same subset $D$ of $V(H^+)$ then
we say that $(x_0,x_1),(x_1,x_2),...,\allowbreak
(x_{n-1},x_n),\allowbreak (x_n,x_0)$ is a {\em circuit\/} in $D$.

\begin{lemma}\label{if-circuit}
Let $\F \subseteq \F_3$ and let $H^+$ be the constraint digraph of
$H$ with respect to $\F$. If there exists a circuit in a strong
component $S$ of $H^+$, then $H$ has no $\F$-free ordering.
\end{lemma}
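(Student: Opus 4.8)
The plan is to argue by contradiction: suppose $<$ is an $\F$-free ordering of $V(H)$, and suppose there is a circuit $(x_0,x_1),(x_1,x_2),\dots,(x_{n-1},x_n),(x_n,x_0)$ all lying in a single strong component $S$ of $H^+$. I want to derive an impossible cyclic chain of inequalities among $x_0,x_1,\dots,x_n$ under $<$. The first step is to understand what a single arc of $H^+$ says about an $\F$-free ordering. If $(x,y)\to(z,y)$ is an arc, it came from a forbidden pattern on $\{x,y,z\}$ with the order $x<y<z$; so in an $\F$-free ordering, if $x<y$ then we cannot have $x<y<z$, which (since $z\neq x,y$) forces $z<y$. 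Symmetrically the arc $(y,z)\to(y,x)$ says: if $y<z$ then $y<x$ is forbidden in that position, hence $x<y$. The clean statement I would isolate as a sub-claim is: \emph{for every arc $(u,v)\to(u',v')$ of $H^+$, if $u<v$ then $u'<v'$.} Both arc types above satisfy this, and these are the only two arc types by construction, so the sub-claim holds.

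Given the sub-claim, the main step is almost immediate. Pick any pair in the circuit, say $(x_0,x_1)$; under $<$ either $x_0<x_1$ or $x_1<x_0$. I would like to "rotate" around the circuit. If $x_0<x_1$: but the pairs $(x_0,x_1)$ and $(x_1,x_0)$ need not both be arcs in sequence — the circuit as defined only requires all listed pairs to lie in the same set $D=S$, not that consecutive ones are arcs. So instead I use strong connectivity of $S$: for each consecutive pair of circuit-vertices $(x_{i-1},x_i)$ and $(x_i,x_{i+1})$ (indices mod $n+1$), both lie in the strong component $S$, hence there is a directed path in $S$ from $(x_{i-1},x_i)$ to $(x_i,x_{i+1})$. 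Applying the sub-claim along that path: if $x_{i-1}<x_i$ then $x_i<x_{i+1}$. Chaining this around the circuit $x_0,x_1,\dots,x_n,x_0$ yields $x_0<x_1<x_2<\cdots<x_n<x_0$, a contradiction; if instead $x_1<x_0$, I run the same argument with the dual component $\overline S$ (using the remark that $(x,y)\to(x,z)$ implies $(z,x)\to(y,x)$, so $\overline S$ is also strongly connected) and the dual form of the sub-claim (if $u>v$ then $u'>v'$ along arcs), again reaching $x_0>x_1>\cdots>x_n>x_0$. Either way $<$ cannot exist.

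The one point requiring a little care — the "hard part," though it is not really hard — is the reduction from a circuit in $D=S$ to a genuine cyclic chain of arc-implications. The definition of circuit only asserts membership in the common set $D$; it is strong connectivity of the component that supplies, for each $i$, the directed $S$-path linking $(x_{i-1},x_i)$ to $(x_i,x_{i+1})$ that makes the implication "$x_{i-1}<x_i \Rightarrow x_i<x_{i+1}$" available. I would state and use explicitly the transitivity of the sub-claim along directed paths (trivial by induction on path length), and note that all intermediate vertices of these paths are irrelevant — we only read off the implication between the two endpoints, which happen to be consecutive circuit pairs. Once that bookkeeping is in place, closing the cycle of strict inequalities is the contradiction, and the lemma follows.
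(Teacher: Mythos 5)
Your proposal is correct and takes essentially the same route as the paper's own proof: the arc semantics (along every arc $(u,v)\to(u',v')$ of $H^+$, an $\F$-free ordering with $u<v$ forces $u'<v'$), strong connectivity of $S$ supplying directed paths between consecutive circuit pairs to chain these implications, and the dual component $\overline{S}$ to dispose of the reversed case. Two direction slips in your write-up should be fixed, though they do not affect the argument since the sub-claim you actually invoke is stated correctly: for the arc $(y,z)\to(y,x)$ the conclusion from $y<z$ is $y<x$ (not $x<y$), and the ``dual form'' \emph{if $u>v$ then $u'>v'$} is false along a single arc of $H^+$ --- the reversed case is handled, as both you and the paper in effect do, by applying the original implication along directed paths inside the strongly connected set $\overline{S}$ joining the reversed pairs $(x_{i+1},x_i)$.
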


\begin{proof} For a contradiction suppose $ <$ is an $\F$-free ordering.
Consider a circuit $(x_0,x_1),\dots,\allowbreak
(x_{n-1},x_n),\allowbreak (x_n,x_0)$ in $S$. Since $S$ is strong,
there is a directed path $P_i$ from $(x_i,x_{i+1})$ to
$(x_{i+1},x_{i+2})$ in $S$. If $x_i < x_{i+1}$ then following the
path $P_i$ in $S$ we conclude that we must have $x_{i+1} <
x_{i+2}$, and eventually by following each $P_j$, $0 \le j \le n$
we conclude that $x_i < x_{i+1} < \cdots < x_{i-1} < x_i$. This is
a contradiction. Thus we must have $x_{i+1} < x_i$.  Now there is
a path $P_i'$ in $\overline{S}$ and hence by following the
path $P_i'$ we must have $x_i < x_{i-1}$ and eventually conclude
that $x_{i+1} < x_{i} < \cdots < x_{i+2} < x_{i+1}$, yielding a
contradiction. \qed \end{proof}

\begin{lemma}\label{lem:empty pattern}
{\rm (a)} Suppose $\emptyset \in \F \subseteq \F_3$.

If $H$ contains an independent set of three vertices, then $H^+$
has a strong component with a circuit and $H$ has no $\F$-free
ordering.

Otherwise $H^+$ is the same for $\F$ and for
$(\F\setminus\{\emptyset\})$, and $H$ has an $\F$-free ordering if
and only if it has an $(\F\setminus\{\emptyset\})$-free ordering.

{\rm (b)} Suppose $\{12,13,23\} \in \F \subseteq \F_3$.

If $H$ contains a triangle, then $H^+$ has a strong component with
a circuit and $H$ has no $\F$-free ordering.

Otherwise $H^+$ is the same for $\F$ and for
$(\F\setminus\{\emptyset\})$, and $H$ has an $\F$-free ordering if
and only if it has an $(\F\setminus\{\{12,13,23\}\})$-free
ordering.
\end{lemma}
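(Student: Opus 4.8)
The plan is to settle part~(a) by a direct analysis and then deduce part~(b) from it by complementation.

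For part~(a), assume $\emptyset\in\F$. Suppose first that $H$ contains an independent set $\{a,b,c\}$. Then each of the six linear orderings of $a,b,c$ induces the pattern $\emptyset\in\F$, so no linear ordering of $V(H)$ can be $\F$-free. To exhibit a circuit inside a strong component of $H^+$, note that since every ordering $u<v<w$ of $\{a,b,c\}$ induces a forbidden pattern, $H^+$ contains the arcs $(u,v)\rightarrow(w,v)$ and $(v,w)\rightarrow(v,u)$ for all six choices of $(u,v,w)$. Reading off the arcs coming from the orderings $c<a<b$, $a<c<b$ and $a<b<c$ produces the directed path $(a,b)\rightarrow(a,c)\rightarrow(b,c)\rightarrow(b,a)$; applying the same recipe with the roles of $a$ and $b$ interchanged gives the directed path $(b,a)\rightarrow(b,c)\rightarrow(a,c)\rightarrow(a,b)$. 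Hence $(a,b)$ and $(b,a)$ lie in a common strong component $S$ of $H^+$, so $(a,b),(b,a)$ is a circuit in $S$ (the case $n=1$ of the definition), and Lemma~\ref{if-circuit} reconfirms that $H$ has no $\F$-free ordering.

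Suppose now that $H$ has no independent set of three vertices. Then no ordered triple $x<y<z$ of $V(H)$ induces the pattern $\emptyset$, so removing $\emptyset$ from $\F$ changes neither the set of triples that induce a forbidden pattern nor the arc set of $H^+$; it also cannot affect whether $\emptyset$ occurs under a given ordering, since it never does. Thus $H^+$ is the same digraph for $\F$ and for $\F\setminus\{\emptyset\}$, and a linear ordering of $V(H)$ is $\F$-free if and only if it is $(\F\setminus\{\emptyset\})$-free. This completes part~(a).

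For part~(b) I would pass to complements, setting $\F'=\overline{\F}$ and $H'=\overline{H}$. Complementation carries the pattern $\{12,13,23\}$ to $\emptyset$, so $\emptyset\in\F'$; it turns the triangles of $H$ into the independent triples of $H'$, and conversely; an ordered triple $x<y<z$ induces a pattern $P$ in $H$ precisely when it induces $\overline{P}$ in $H'$, and $P\in\F$ precisely when $\overline{P}\in\F'$, so the triples that trigger arcs are exactly the same and the constraint digraph of $H'$ with respect to $\F'$ is literally $H^+$; finally, as recalled in the text, a linear ordering of $V(H)=V(H')$ is $\F$-free for $H$ if and only if it is $\F'$-free for $H'$. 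Applying part~(a) to $H'$ and $\F'$ then yields part~(b), with $\F\setminus\{\{12,13,23\}\}$ in place of $\F\setminus\{\emptyset\}$. The one step here that is not purely bookkeeping is producing the directed cycle through $(a,b)$ and $(b,a)$ in the independent-triple case, and even that is just a finite verification over the six orderings of $\{a,b,c\}$; I do not expect any genuine obstacle.
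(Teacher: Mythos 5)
Your proof is correct and follows essentially the same route as the paper: for an independent triple $\{a,b,c\}$ you use the arcs generated by the empty pattern to place $(a,b)$ and $(b,a)$ in one strong component, giving the circuit, and you observe that without such a triple the pattern $\emptyset$ is vacuous, so $H^+$ and the notion of $\F$-freeness are unchanged. The only (cosmetic) difference is that you obtain part~(b) from part~(a) by the complementation correspondence, whereas the paper simply notes that the triangle case is handled by the symmetric direct argument; both are immediate.
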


\begin{proof} We only prove part (a) since the proof of (b) is similar.

Let $a,b,c$ be pairwise nonadjacent vertices of $H$. If $\emptyset
\in \F$, then $(a,b)\rightarrow (c,b)$, and $(c,b)\rightarrow
(a,b)$, thus $(a,b)$ and $(c,b)$ are in the same strong component
of $H^+$. Similarly, we have $(a,b)\rightarrow (a,c)$, and
$(a,c)\rightarrow (a,b)$, thus $(a,b)$ and $(a,c)$ are in the same
strong component of $H^+$. By symmetry, applied to other pairs, we
conclude that all ordered pairs of two distinct vertices from the
set $\{a,b,c\}$ are in the same strong component $S$ of $H^+$.
Clearly, $(a,b),(b,a)$ is a circuit in $S$.

As for the second part of the claim, if $H$ has no independent set
of three vertices, then $\emptyset$ contributes no restriction to
orderings of $V(H)$, so both the claims follow. \qed \end{proof}

Our main result is the following theorem which implies that
$\ORD_3$ is solvable in polynomial time. (In fact, its proof will
amount to a polynomial-time algorithm to actually construct an
$\F$-free ordering if one exists.)

\begin{theorem}\label{main}
Let $\F \subseteq \F_3$ and let $H^+$ be the constraint digraph of
$H$ with respect to $\F$. Then $H$ has an $\F$-free ordering if
and only if no strong component of $H^+$ contains a circuit.
\end{theorem}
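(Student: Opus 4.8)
The plan is to prove the nontrivial direction: if no strong component of $H^+$ contains a circuit, then $H$ admits an $\F$-free ordering (the converse is Lemma~\ref{if-circuit}). The natural strategy is to extract from $H^+$ a partial order on $V(H)$ that is consistent with the domination arcs, and then show that \emph{any} linear extension of it is $\F$-free. First I would contract each strong component of $H^+$ to a single node, obtaining a DAG; the hypothesis says that within any strong component $S$, the pairs it contains never form a circuit, and moreover (since $\overline{S}$ is also a strong component, by the remark that $(x,y)\to(x,z)$ implies $(z,x)\to(y,x)$) one can hope to orient the vertex pairs coherently: the goal is to assign to each ordered pair $(x,y)$ a Boolean value ``$x$ before $y$'' so that (i) $(x,y)$ and $(y,x)$ get opposite values, and (ii) the value is monotone along arcs of $H^+$, i.e.\ if $(x,y)\to(x',y')$ and $(x,y)$ is ``true'' then $(x',y')$ is ``true''.

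The key step is to show such an assignment exists and in fact defines a partial order. For (i)–(ii) to be simultaneously satisfiable, one needs that no strong component $S$ contains both $(x,y)$ and $(y,x)$ for any $x,y$ — equivalently $S \cap \overline S = \emptyset$ — and this is exactly what the ``no circuit'' hypothesis buys us, since $(x,y),(y,x)$ is the shortest possible circuit and, more generally, a path from $(x,y)$ to some pair and back would combine with the dual path to produce a circuit, as in the proof of Lemma~\ref{if-circuit}. So I would first argue $S \cap \overline S = \emptyset$ for every strong component, then pick a topological order of the component-DAG and set $(x,y)$ to ``true'' iff the component of $(x,y)$ comes before the component of $(y,x)$ — using the fact that $S \mapsto \overline S$ reverses all arcs, so these two components are comparable-or-incomparable in a way compatible with the DAG order. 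The relation $R = \{(x,y) : (x,y) \text{ is true}\}$ is then total (condition (i)) and I must check transitivity: if $(x,y)$ and $(y,z)$ are true but $(x,z)$ were false, i.e.\ $(z,x)$ true, then following the relevant arcs of $H^+$ among the pairs on $\{x,y,z\}$ should yield a circuit, contradicting the hypothesis — this uses that the pattern on $\{x,y,z\}$, whatever it is, either imposes no constraint (and then there is nothing to check) or imposes arcs that the monotonicity respects.

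Finally, having a linear order $<$ equal to (a linear extension of) $R$, I would verify it is $\F$-free directly: if some triple $x<y<z$ induced a forbidden pattern, then by construction of $H^+$ there are arcs $(x,y)\to(z,y)$ and $(y,z)\to(y,x)$; but $x<y$ means $(x,y)$ is true, so $(z,y)$ is true, i.e.\ $z<y$, contradicting $y<z$ (the second arc gives the symmetric contradiction). Hence no forbidden pattern occurs.

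I expect the main obstacle to be the transitivity argument — turning ``a violation of transitivity on $\{x,y,z\}$'' into ``a circuit in a strong component of $H^+$.'' The subtlety is that the three vertices $x,y,z$ need not themselves induce a forbidden pattern (so $H^+$ need not have arcs directly among the six pairs on $\{x,y,z\}$); one must instead chase the \emph{directed paths} in $H^+$ witnessing the truth of $(x,y)$, $(y,z)$, $(z,x)$ back through the component structure, exactly as Lemma~\ref{if-circuit} chases the paths $P_i$ and $P_i'$, and assemble them into a genuine circuit. Getting this bookkeeping right — in particular handling the passage between a strong component $S$ and its dual $\overline S$ without losing track of orientations — is the delicate part; everything else is essentially a definition-unwinding once the partial order is in hand.
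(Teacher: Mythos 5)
There is a genuine gap, and it sits exactly at the point you flag as ``the main obstacle'': transitivity. Your construction (a 2-SAT-style assignment from a topological order of the condensation of $H^+$) can indeed be made to satisfy (i) truth of exactly one of $(x,y),(y,x)$ and (ii) closure under the arcs of $H^+$ --- note, though, that your stated convention is backwards: you must declare $(x,y)$ true when its component comes \emph{after} that of $(y,x)$, otherwise (ii) already fails (with ``before'', the order comp$(x,y)$, comp$(y,x')$\dots admits an arc $(x,y)\to(x',y')$ with $(x,y)$ true and $(x',y')$ false). The real problem is that a total, antisymmetric, arc-closed assignment need not be transitive, and your proposed rescue --- ``a violation of transitivity on $\{x,y,z\}$ chases back to a circuit in a strong component'' --- is not valid: the pairs $(x,y),(y,z),(z,x)$ witnessing the violation may lie in three \emph{different} strong components, and the hypothesis of Theorem~\ref{main} only excludes circuits \emph{inside a single} strong component. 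A circuit distributed over several components of $H^+$ is perfectly compatible with the hypothesis, so no contradiction is reached. This is precisely why the paper does not take an arbitrary consistent assignment: its algorithm chooses green (sink) components under specific priority rules (components containing an edge pair, resp.\ a non-edge pair, trivial components, sink components of $H^+$), and when adding a component $S$ would close a circuit in $D$, it instead adds $\overline S$ together with its entire outsection. The content of the theorem is the several-page case analysis (Lemma~\ref{correctness-for-3}: cases A, B, C1, C2, the clique/independent-set structure of a shortest first circuit, etc.) showing that under these choices no circuit is ever created; for general $\F\subseteq\F_3$ the absence of circuits within components does not by itself make every arc-closed orientation acyclic --- indeed the paper's Theorem~\ref{intervaltm} singles out the ``nice'' families as the special cases where a weaker obstruction (invertible pairs) suffices.

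Your converse direction and the final verification step (if $x<y<z$ induced a forbidden pattern, the arc $(x,y)\to(z,y)$ plus closure would force $z<y$) are fine, and they match the paper's Lemma~\ref{if-circuit} and the intended use of $D$. But as it stands the proposal proves only that the 2-SAT relaxation is satisfiable when there is no invertible pair; the passage from ``no circuit in any strong component'' to a \emph{transitive} closed assignment --- the heart of the theorem --- is missing, and the suggested argument for it would fail.
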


Theorem \ref{main} will follow from the correctness of our
algorithm for $\ORD_3$. The proof of correctness will be given
in a later section.

Note that Theorem \ref{main}
provides a universal forbidden substructure (namely a circuit in a
strong component of $H^+$) characterizing the membership in graph
classes as varied as chordal graphs, interval graphs, proper
interval graphs, comparability graphs, and co-comparability
graphs.

We say a strong component $S$ of $H^+$ is a {\em sink component\/}
if there is no arc from $S$ to a vertex outside $S$ in $H^+$.
Consider a subset $D$ of the pairs in $V(H^+)$. We say that a
strong component $S$ of $H^+\setminus (D \cup \overline{D})$ is
{\em green with respect to $D$} if there is no arc from an element
of $S$ to a vertex in $H^+ \setminus (D \cup \overline{D} \cup
S)$. This is equivalent to the condition that $S$ is a sink
component in $H^+\setminus (D\cup \overline D)$.

In the algorithm below, we start with an empty set $D$ and we
construct the final set $D$ step by step. After each step of the
algorithm, $D$ (and hence also $\overline D$) is the union of
vertex-sets of strong components of $H^+$ and neither $D$ nor
$\overline{D}$ contains a circuit. Each strong component $S$ of
$H^+$  either belongs to $D$ or $\overline{D}$ or $V(H^+)
\setminus (D \cup \overline{D})$. At the end of the algorithm $D
\cup \overline{D}$ is a partition of the vertices (pairs) in
$V(H^+)$ such that whenever $(x,y),(y,z) \in D$ then $(x,z) \in
D$. We will say that $D$ satisfies {\em transitivity condition}.
At the end of the algorithm we place $x$ before $y$ whenever
$(x,y) \in D$ and we obtain the desired ordering. We say a strong
component is {\em trivial} if it has only one element otherwise it
is called {\em non-trivial}.

\bigskip
\noindent
{\sc Ordering with forbidden $3$-patterns, $\ORD_3$}\\
\noindent
{\sc Input:} A graph $H$ and a set $\F \subseteq \F_3$ of forbidden patterns on three vertices\\
\noindent {\sc Output:} An $\F$-free ordering of the vertices of
$H$ or report that there is no such ordering.

\bigskip

\noindent {\sc Algorithm for $\ORD_3$}
\begin{enumerate}
\item If a strong component $S$ of $H^+$ contains a circuit then
report that no solution exists and exit.
      Otherwise, remove $\emptyset$ and $\{12,13,23\}$ from $\F$.
If $\F$ is empty after this step, then return any ordering of vertices of $H$ and stop.
\item Set $D$ to be the empty set. \item Choose a strong component
$S$ of $H^+$ that is green with respect to $D$. The choice is made
according to the following rules.

a) If $\F$ contains one of the forbidden patterns
$\{13,23\},\{12,13\},\{12,23\}$, then the priority is given to
strong components containing a pair $(x,y)$ with $xy \in E(H)$. If
there is a choice then it is preferred $S$ to be a trivial
component. Subject to these preferences, if there are several
candidates, then priority is given to the ones that are sink
components in $H^+$.

b) If $\F$ contains one of $\{12\},\{23\},\{13\}$, then priority
is given to a strong component $S$ containing $(x,y)$ with $xy
\not\in E(H)$. If there is a choice, then the priority is given to
trivial components, and if there are several candidates for $S$,
then preference is given to the sink components in $H^+$.

\item If by adding $S$ into $D$ we do not close a circuit, then we
add $S$ into $D$ and discard $\overline{S}$. Otherwise we add
$\overline{S}$ and its outsection (all vertices in $H^+$ that are
reachable from $\overline{S}$) into $D$ and discard $S$ and its
insection (the vertices that can reach $S$). Return to Step 3 if
there are some strong components of $H^+$ left.

\item For every $(x,y) \in D$, place $x$ before $y$ in the final
ordering.
\end{enumerate}

Our proof of the correctness of the algorithm will, in particular,
also imply Theorem \ref{main}.

\begin{corollary}
Each problem $\ORD(\F)$ with $\F \subseteq \F_3$ can be solved in
polynomial time.
\end{corollary}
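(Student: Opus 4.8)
The plan is to observe that for a fixed family $\F \subseteq \F_3$, the problem $\ORD(\F)$ is nothing but the restriction of $\ORD_3$ to inputs whose pattern set happens to be $\F$; hence it suffices to show that the \textsc{Algorithm for $\ORD_3$} stated above runs in polynomial time and is correct. Correctness — in particular the forbidden‑substructure characterization of Theorem~\ref{main}, together with Lemma~\ref{if-circuit} and Lemma~\ref{lem:empty pattern} — is established in a later section, and I would take it as given here; what remains for the corollary is to bound the running time. First I would bound the size of the constraint digraph $H^+$: it has $n(n-1)=O(n^2)$ vertices, and since every ordered triple $x<y<z$ of $V(H)$ contributes at most two arcs (from $(x,y)$ to $(z,y)$ and from $(y,z)$ to $(y,x)$) when $x,y,z$ induce a forbidden pattern, $H^+$ has $O(n^3)$ arcs. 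Building $H^+$ thus costs $O(n^3)$: for each of the $O(n^3)$ ordered triples one checks in constant time whether the induced ordered subgraph lies in $\F$ and, if so, inserts the corresponding arcs.

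Next I would account for Step~1. The strong components of $H^+$ are computed in time linear in the size of $H^+$, i.e.\ $O(n^3)$. To test whether a strong component $S$ contains a circuit, note that a circuit $(x_0,x_1),\dots,(x_n,x_0)$ in $S$ is precisely a directed cycle in the auxiliary digraph on vertex set $V(H)$ whose arcs are $\{xy : (x,y)\in S\}$; this digraph is built and tested for acyclicity (say by a topological sort) in time $O(n+|S|)$, so over all strong components the test costs $O(n^3)$, since $\sum_S |S| = |V(H^+)| = O(n^2)$. The removal of $\emptyset$ and $\{12,13,23\}$ from $\F$ in Step~1 is justified by Lemma~\ref{lem:empty pattern} and costs nothing.

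Finally I would bound the greedy phase, Steps~2–6. Because at every stage $D\cup\overline D$ is a union of whole strong components of $H^+$, the strong components of the residual digraph $H^+\setminus(D\cup\overline D)$ are exactly the strong components of $H^+$ not yet committed, and there are only $O(n^2)$ of these; each iteration of Step~3–4 removes at least the chosen component $S$ and its dual $\overline S$, so the loop runs $O(n^2)$ times. Inside an iteration, a green component is found by scanning out‑arcs of the residual digraph; the test in Step~4 of whether adding $S$ to $D$ closes a circuit is again a directed‑cycle test in the digraph on $V(H)$ with arc set $\{xy:(x,y)\in D\cup S\}$, costing $O(n^2)$; and computing the outsection of $\overline S$ or the insection of $S$ is a single search in $H^+$, costing $O(n^3)$. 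Hence each iteration is polynomial and the whole algorithm is polynomial (a crude bound of $O(n^5)$ already suffices for the corollary; a more careful implementation attains the $O(n^3)$ bound advertised in the introduction). The only mildly delicate bookkeeping points are recognizing circuits inside strong components and verifying that the insection/outsection handling in Step~4 keeps both $D$ and $\overline D$ circuit‑free while terminating in $O(n^2)$ rounds; both are routine. The genuine difficulty of the paper — showing that the greedy choices in Step~3 never get stuck, so that the algorithm actually outputs an $\F$-free ordering whenever one exists — is precisely what is deferred to the correctness section and assumed here.
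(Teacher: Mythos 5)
Your proposal is correct and follows essentially the same route as the paper: the corollary is obtained by specializing the master algorithm for $\ORD_3$ to a fixed $\F\subseteq\F_3$, taking its correctness (Theorem~\ref{main}, Lemmas~\ref{if-circuit} and~\ref{lem:empty pattern}, and the later correctness section) as given, and observing that all steps run in polynomial time in the size of $H^+$. Your explicit accounting of circuit detection and the greedy phase is just a more detailed (and cruder, $O(n^5)$ versus the paper's remarked $O(n^3)$) version of the paper's running-time remark, which is immaterial for the corollary.
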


{\bf Remark.} Our algorithm is linear in the size of $H^+$. The
number of edges in $H^+$ is at most $n^3$ since each pair $(x,y)$
has at most $n$ out-neighbors. Thus the algorithms runs in
$O(n^3)$, where $n = |V(H)|$. In some cases, e.g., when $|\F| =
1$, this can be improved to $O(nm)$, where $m = |E(H)|$.

\subsection{Obstruction Characterizations}

Many of the known graph classes discussed here have obstruction
characterizations, usually in terms of forbidden induced subgraphs
or some other forbidden substructures. A typical example is
chordal graphs, whose very definition is a forbidden induced
subgraph description: no induced cycles of length greater than
three. Interval graphs have been characterized by Lekkerkerker and
Boland \cite{lekker} as not having an induced cycle of length
greater than three, and no substructure called an asteroidal
triple. Proper interval graphs have been characterized by the
absence of induced cycles of length greater than three, and three
special graphs usually called net, tent, and claw \cite{wegner}.
Comparability graphs have a similar forbidden substructure
characterization \cite{gallai}.

The constraint digraph offers a natural way to define a common
obstruction characterization for all these graph classes. In fact,
Theorem \ref{main} can be viewed as an obstruction
characterization of $\ORD(\F)$ for any $\F \subseteq \F_3$, i.e., each
of these classes is characterized by the absence of a circuit in a
strong component of the constraint digraph. Moreover, our
algorithm is a certifying algorithm, in the sense that when it
fails, it identifies a circuit in a strong component of $H^+$.

For some of the sets $\F \subseteq \F_3$, we have an even simpler
forbidden substructure characterization. We say $x,y$ is an {\em
invertible pair} of $H$ if $(x,y)$ and $(y,x)$ belong to the same
strong component of $H^+$. (Thus an invertible pair is precisely a
circuit of length two.) We say $\F$ is {\em nice} if it is one of
the following sets
$$\{\{13\}\},\{\{12,23\}\}, \{\{13\}, \{13,23\}\}, \{\{13\},\{12,13\},\{13,23\}\}.$$

By following the correctness proof of our algorithm, it will be seen
that if $\F$ is nice, then the algorithm does not
create a circuit as long as every strong component $S$ of $H^+$
has $S \cap \overline{S} = \emptyset$. Thus we obtain the
following theorem for nice sets $\F$.

\begin{theorem}\label{intervaltm}
Suppose $\F$ is nice. A graph $H$ admits an $\F$-free ordering if
an only if it does not have an invertible pair. \qed
\end{theorem}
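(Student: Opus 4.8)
\noindent\emph{Proof plan.} The idea is that, given Theorem~\ref{main}, Theorem~\ref{intervaltm} is essentially a reformulation: for a nice $\F$ one wants ``some strong component of $H^+$ contains a circuit'' to be equivalent to ``$H^+$ has an invertible pair''. One implication is trivial and holds for every $\F\subseteq\F_3$: an invertible pair $x,y$ means $(x,y)$ and $(y,x)$ lie in one strong component $S$, and the cyclic sequence $(x,y),(y,x)$ is then a circuit in $S$ (a circuit of length two, the case $n=1$ with $x_0=x$, $x_1=y$), so by Theorem~\ref{main} --- indeed already by Lemma~\ref{if-circuit} --- $H$ has no $\F$-free ordering. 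So it is only the reverse implication, equivalently the ``if'' part of the theorem, that needs work.

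For that I would run the $\ORD_3$ algorithm on $H$ and argue that it succeeds whenever $H$ has no invertible pair, i.e.\ whenever $S\cap\overline{S}=\emptyset$ for every strong component $S$ of $H^+$. Since a nice $\F$ contains the pattern $\{13\}$, Step~3 of the algorithm is governed by rule~(b): priority to green components containing a non-edge pair, then to trivial green components, then to sink components of $H^+$. The claim to establish --- exactly the property highlighted in the remark preceding the theorem --- is that under the hypothesis $S\cap\overline{S}=\emptyset$ for all $S$ the algorithm never closes a circuit: it never exits in Step~1 (so this hypothesis already forbids \emph{every} circuit in a strong component, not just circuits of length two), and it never has to take the ``otherwise'' branch of Step~4 when absorbing the chosen green component into $D$. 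Granting this, the algorithm terminates with a set $D$ that is circuit-free and satisfies the transitivity condition, and the correctness of the algorithm then guarantees that the ordering placing $x$ before $y$ for each $(x,y)\in D$ is $\F$-free; together with the trivial implication above this proves the theorem.

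The substantive part is this last claim, and I expect proving it to be a case analysis inside the correctness proof of the algorithm, carried out separately for the four nice families. For each of them the constraint digraph has a very restricted shape: three of the families (those built from $\{12,23\}$ alone, from $\{13\}$ alone, and from $\{13\},\{12,13\},\{13,23\}$) make $H^+$ symmetric, while the interval-graph family $\F=\{\{13\},\{13,23\}\}$ gives the digraph in which $(a,b)\to(c,b)$ iff $ac\in E(H)$ and $ab\notin E(H)$, and $(a,b)\to(a,c)$ iff $cb\in E(H)$ and $ca\notin E(H)$. Using this monotone structure together with rule~(b)'s preference for non-edge and trivial components, one would show that if a circuit first appears in $D\cup S$ when a sink component $S$ is added, then it must interleave pairs of the previous $D$ with pairs of $S$, and that tracing such a circuit backwards along arcs of $H^+$ forces some already-processed strong component, or $S$ itself, to meet its own dual --- contradicting the standing hypothesis. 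I expect the difficulty to be bookkeeping rather than conceptual; once the verification is done, Theorem~\ref{intervaltm} follows as described.
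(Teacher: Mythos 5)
Your overall architecture is the same as the paper's: the forward direction is immediate from Lemma~\ref{if-circuit} (an invertible pair is exactly a circuit of length two inside a strong component), and the paper likewise obtains the converse by observing that, for nice $\F$, the correctness analysis of the $\ORD_3$ algorithm goes through under the weaker hypothesis $S\cap\overline{S}=\emptyset$ for every strong component $S$ --- equivalently, that for nice $\F$ any circuit in a strong component forces a circuit of length two, after which Theorem~\ref{main} applies. You isolate these as the claims to be established, and that is the right reduction. The genuine gap is that you do not establish them: beyond the trivial direction, the entire content of the theorem is precisely this verification, and your proposal replaces it with the expectation that it is ``bookkeeping rather than conceptual.'' In the paper the verification is not a separate argument but is read off from the long case analysis proving Lemma~\ref{correctness-for-3}; a self-contained proof has to actually carry out that analysis (or an equivalent one) for the four nice families, and your sketch --- ``tracing such a circuit backwards \dots forces some already-processed strong component, or $S$ itself, to meet its own dual'' --- is an assertion of the conclusion, not an argument for it.

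Moreover, the sketch starts from a false premise. Not every nice $\F$ contains $\{13\}$: the family $\{\{12,23\}\}$ does not, and for it only rule~(a) of Step~3 is triggered, not rule~(b). For the interval and proper-interval families $\F$ also contains $\{13,23\}$ or $\{12,13\}$, so rule~(a) is in play there as well; indeed the paper's correctness proof explicitly normalizes (by complementation) so as to argue within Case~(a) throughout. So organizing the missing case analysis around ``rule~(b)'s preference for non-edge and trivial components'' would not match how the algorithm actually behaves on three of the four nice families. Your structural observations about $H^+$ (symmetry for $\{\{12,23\}\}$, $\{\{13\}\}$ and the proper-interval family, and the explicit arc description for the interval family $\{\{13\},\{13,23\}\}$) are correct and would be useful ingredients, but by themselves they do not close the argument.
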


In fact the correctness proof will show that if there is any circuit
in a strong component of $H^+$, then there is also a circuit of
length two.

Theorem \ref{intervaltm} applies to, amongst others, interval
graphs, proper interval graphs, comparability graphs and
co-comparability graphs.

\section{Bipartite graphs}\label{bipartite}

In this section we consider bipartite graphs $H$ with a fixed
bipartition $U\cup V$. We prove that $\BORD_4$ is polynomial-time
solvable, and so $\BORD(\F)$ is polynomial-time solvable for each
$\F \subseteq \B_4$. Each forbidden pattern $F \in \F $ imposes
constraints for those 4-tuples of vertices that induce a subgraph
isomorphic to $\F$. We construct an auxiliary digraph $H^+$, that
we also call a {\em constraint digraph\/}. The vertex set of $H^+$
consists of the pairs $(x,y) \in (U \times U)\cup (V\times V)$,
where $x \ne y$, and the arc-set of $H^+$ is defined as follows.

There is an arc from $(x,y)$ to $(z,y)$ and an arc from $(y,z)$ to
$(y,x)$ whenever the vertices $x, y, z$ from the same part ($U$ or
$V$) of the bipartition, ordered $x < y < z$, together with some
vertex $v$ from the other part of the bipartition ($V$ or $U$),
induce a forbidden pattern in $\F$. There is also an arc from
$(x,y)$ to $(u,v)$ and an arc from $(v,u)$ to $(y,x)$ whenever the
vertices $x, y$ from the same part, ordered as $x<y$, together
with some vertices $u, v$ from the other part, ordered as $v<u$,
induce a pattern in $\F$.

We say that a pair $(x,y)$ {\em dominates} $(x',y')$ and we write
$(x,y) \rightarrow (x',y')$ if there is an arc from $(x,y)$ to
$(x',y')$ in $H^+$.

A {\em circuit\/} in a subset $D$ of $H^+$ is a sequence of pairs
$(x_0,x_1),(x_1,x_2),\dots,\allowbreak (x_{n-1},x_n),\allowbreak
(x_n,x_0)$, $n\ge1$, that all belong to $D$. Observe that
$x_0,x_1,\dots,x_n$ belong to the same bipartition part of $V(H)$.

\bigskip

\noindent
{\sc Ordering with bipartite forbidden $4$-patterns, $\BORD_4$}\\
\noindent
{\sc Input:} A bigraph $H=(U,V)$ and a set $\F \subseteq \B_4$ of
bipartite forbidden patterns on four vertices\\
\noindent {\sc Output:} And ordering of the vertices in $U$ and an
ordering of the vertices in $V$ that is a $\F$-free ordering or
report that there is no such ordering.

\bigskip

\noindent {\sc Algorithm for $\BORD_4$}
\begin{enumerate}
\item If a strong component $S$ of $H^+$ contains a circuit then
report that no solution exists and exit.
      Otherwise, remove $\emptyset$, $\{11',12',21',22'\}, \{11',12',13' \}$ and
$\{11',21',31'\}$ from $\F$. If $\F$ is empty after this step,
then return any ordering of vertices of $H$ and stop. \item Set
$D$ to be the empty set. \item Choose a strong component $S$ of
$H^+$ that is green with respect to $D$. The choice is made
according to the following rules.

a) If $\F$ contains one of the forbidden patterns
$\{11',12',21'\},\{12',21',22'\},$ \\ $
\{11',12',22'\},\{11',21',22'\}$ then priority is given to a
component $S$ containing $(x,y)$ where $x,y$ have a common
neighbor in $H$. If there is a choice then it is preferred $S$ to
be a trivial component. Subject to these preferences, if there are
several candidates, then priority is given to the ones that are
sink components in $H^+$.

b) If $\F$ contains one of the forbidden patterns
$\{11'\},\{22'\}),\{12'\}, \{21'\}$ then priority is given to a
component $S$ containing $(x,y)$ where $x,y$ have a common
non-neighbor in $H$. If there is a choice then it is preferred $S$
to be a trivial component. Subject to these preferences, if there
are several candidates, then priority is given to the ones that
are sink components in $H^+$.

\item If by adding $S$ into $D$ we do not close a circuit, then we
add $S$ into $D$ and discard $\overline{S}$. Otherwise we add
$\overline{S}$ and its outsection (all vertices in $H^+$ that are
reachable from $\overline{S}$) into $D$ and discard $S$ and its
insection (the vertices that can reach $S$). Return to Step 3 if
there are some strong components of $H^+$ left.

\item For every $(x,y) \in D$, place $x$ before $y$ in the final
ordering.
\end{enumerate}

A polynomial-time solution to $\BORD_4$ is implicit in the
following fact, the main result of this section.

\begin{theorem}\label{main-bipartite}
Let $\F \subseteq \B_4$ and let $H^+$ be the constraint digraph of $H$
with respect to $\F$. Then $H$ has a $\F$-free ordering of its
parts if and only if no strong component of $H^+$ contains a
circuit.
\end{theorem}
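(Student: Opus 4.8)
\textbf{Proof plan for Theorem \ref{main-bipartite}.}
The plan is to mirror the structure of the undirected case (Theorem \ref{main}) and prove correctness of the \BORD$_4$ algorithm. The ``only if'' direction is the easy half: it is essentially Lemma \ref{if-circuit} again. If a strong component $S$ of $H^+$ contains a circuit $(x_0,x_1),\dots,(x_n,x_0)$, then all the $x_i$ lie in one part of the bipartition, and in any $\F$-free ordering of the parts the arcs of $H^+$ force the implications ``$x_i<x_{i+1}\Rightarrow x_{i+1}<x_{i+2}$'' along directed paths inside $S$, while the dual paths inside $\overline S$ force the reverse implications; chasing either one around the circuit yields $x_i<x_i$, a contradiction. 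I would also first dispose of the four ``universal'' patterns $\emptyset$, $\{11',12',21',22'\}$, $\{11',12',13'\}$, $\{11',21',31'\}$ exactly as in Lemma \ref{lem:empty pattern}: either $H$ contains the corresponding configuration (three independent vertices in a part, a $K_{2,2}$, three vertices of a part with a common neighbor, or three vertices of a part sharing a neighbor in the other part), in which case $H^+$ already has a length-two circuit, or the pattern imposes no constraint and can be deleted from $\F$ without changing $H^+$.

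For the ``if'' direction I would prove that when no strong component of $H^+$ has a circuit, the algorithm runs to completion and outputs a valid $\F$-free ordering. Two invariants must be maintained through the loop of Steps 3--4: (i) after each step $D$ and $\overline D$ are unions of strong components of $H^+$, are disjoint, and neither contains a circuit; and (ii) $D$ is ``closed downward'' in the sense that the outsection of $D$ in $H^+$ is contained in $D\cup\overline D$ — this is what Step 4 enforces when it adds the outsection of $\overline S$ (equivalently, the insection of $S$) to the discarded side. Given these, at termination $D\cup\overline D=V(H^+)$, and the key structural claim is that $D$ satisfies the transitivity condition: $(x,y),(y,z)\in D \Rightarrow (x,z)\in D$. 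I would prove this by contradiction — if $(x,z)\notin D$ then $(z,x)\in D$, and one combines the arcs guaranteed by the relevant small patterns (the triples $x,y,z$ in one part, together with a suitable common neighbor/non-neighbor in the other part) to exhibit a circuit inside $D$, contradicting invariant (i). This is exactly where the priority rules in Step 3a/3b are needed: the branching in the algorithm (choosing components with a common neighbor vs.\ common non-neighbor, trivial before non-trivial, sink components first) is designed precisely so that whenever a transitivity triple $x,y,z$ threatens, the relevant arcs are present in $H^+$ and the greedy order of insertion cannot separate $(x,z)$ from $\{(x,y),(y,z)\}$. Once transitivity holds, placing $x$ before $y$ exactly when $(x,y)\in D$ is a well-defined linear order on each part, and one checks that every forbidden 4-pattern is avoided: a pattern occurring in this ordering would, by the definition of the arcs of $H^+$, force two pairs $(x,y)$ and $(y,x)$ (or $(x,y)$ and $(v,u)$ with the roles reversed) to lie on the wrong sides of $D$, again contradicting that $D$ respects all arcs out of its members.

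The main obstacle — and the only genuinely new work beyond transcribing the undirected argument — is the case analysis justifying that the branching in Step 4 never creates a circuit, i.e.\ that when adding a green component $S$ to $D$ would close a circuit, adding $\overline S$ together with its outsection is safe. Here one must show that a circuit through $\overline S\cup\mathrm{outsection}(\overline S)$ inside $D$ would project back to a circuit already present in the pre-existing $D$ (using invariant (i) and the fact that $S$ was a \emph{sink} component in $H^+\setminus(D\cup\overline D)$, so no arc leaves $S$ except into $D\cup\overline D$), hence a contradiction. For the bipartite/4-vertex setting this argument has more sub-cases than the 3-vertex one, because a single 4-pattern can generate arcs of the ``mixed'' type $(x,y)\to(u,v)$ between the two parts, so one must track how circuits in $U\times U$ and in $V\times V$ interact through these cross-arcs. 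I expect this bookkeeping, organized by which of the (finitely many) relevant patterns from $\B_4$ lie in $\F$, to be the technical heart of the proof; everything else follows the template already established for $\ORD_3$. Finally, the polynomial running time is immediate: $H^+$ has $O(n^2)$ vertices and $O(n^3)$ arcs (each pair has $O(n)$ out-neighbours within its part, plus $O(n^2)$ cross-arcs in the worst case, giving $O(n^4)$ overall), strong components and outsections/insections are computed in linear time in the size of $H^+$, and the loop runs once per strong component; this yields the claimed $O(n^4)$ bound, improving to $O(n^2m)$ when $\F$ is suitably restricted.
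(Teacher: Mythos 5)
Your outline reproduces the correct overall strategy, and the easy direction (the analogue of Lemma \ref{if-circuit}, plus the preprocessing of the ``universal'' patterns as in Lemma \ref{lem:empty pattern}) is fine. But what you submit is a plan, not a proof: the entire content of the theorem lives in the step you explicitly defer, namely the case analysis showing that the greedy addition of green components never closes a circuit, and that when adding $S$ would close one, adding $\overline S$ together with its outsection is safe. You yourself flag this as ``the technical heart'' and ``the only genuinely new work,'' and you never carry it out. This is exactly where the paper spends its effort: for the undirected case this is the long proof of Lemma \ref{correctness-for-3} (cases A, B, C1, C2, with the structural claims that a shortest first circuit forces a clique or an independent set with controlled adjacencies), and for Theorem \ref{main-bipartite} the paper works out a representative bipartite case ($\F=\{\{12',21'\},\{12',21',22'\}\}$) in detail: a shortest first circuit $(x_0,x_1),\dots,(x_n,x_0)$ forces independent edges $x_0y_0,\dots,x_ny_n$, a complete bipartite ``hub'' $X$ of vertices adjacent to all $x_i$ or all $y_i$, and distinct components $H_1,\dots,H_n$ of $H-X$ with $x_i,y_i\in H_i$; from this one reads off that the component containing $(x_i,x_{i+1})$ consists of the pairs $H_i\times H_{i+1}$, that $S$ and $\overline S$ are both green, and that a circuit created by $\overline S$ would splice with the assumed one into an earlier circuit, a contradiction. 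None of that structural work, or any substitute for it, appears in your proposal, so the ``if'' direction is not established.

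Two smaller points. First, your invariant (ii) (outsection of $D$ contained in $D\cup\overline D$) is weaker than what your final verification uses: to conclude that no forbidden pattern occurs you need that $D$ has no arc into $\overline D$ at termination, i.e.\ the outsection of $D$ lies in $D$ itself; this does follow, but only via the duality of arcs (each arc $(x,y)\to(z,y)$ comes with $(y,z)\to(y,x)$) combined with the greenness of components at the time they were added, and that argument should be made explicit. Second, your description of the configurations certifying the ``universal'' patterns is imprecise (e.g.\ for $\emptyset\in\B_4$ the witness is not ``three independent vertices in a part'' but the appropriate edgeless configuration split between the two parts); this is easily repaired, but as written it does not match the patterns $\emptyset$, $\{11',12',21',22'\}$, $\{11',12',13'\}$, $\{11',21',31'\}$ that the algorithm removes.
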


The proof, and the correctness of the algorithm will also be given
in the next section.

\begin{corollary}
Each problem $\BORD(\F)$ with $\F \subseteq \F_4$ can be solved in
polynomial time.
\end{corollary}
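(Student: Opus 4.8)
The plan is to derive the corollary directly from Theorem~\ref{main-bipartite}, so that the only thing left to verify is that the objects appearing in that characterization can be built and inspected in polynomial time. Fix $\F\subseteq\B_4$; since $\B_4$ is a finite set, $|\F|$ is bounded by an absolute constant. The constraint digraph $H^+$ has vertex set $(U\times U)\cup(V\times V)$ minus the diagonal, hence $O(n^2)$ vertices, where $n=|V(H)|$. Its arcs can be listed by scanning every subset of $V(H)$ of size at most four and, for each such subset and each pattern $F\in\F$, testing whether the subset induces $F$ when its vertices are split and ordered as $F$ prescribes; if it does, we add the two arcs dictated by the two rules in the definition of $H^+$. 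This takes $O(|\F|\,n^4)=O(n^4)$ time and produces $O(n^4)$ arcs. The strong components of $H^+$ can then be found in time linear in its size, i.e.\ $O(n^4)$, by Tarjan's algorithm.

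It remains to decide, given the strong components, whether any of them contains a circuit; by Theorem~\ref{main-bipartite} this is precisely the condition that decides $\BORD(\F)$. A convenient reformulation: for a strong component $S$, let $G_S$ be the digraph on $V(H)$ with an arc $x\to y$ exactly when $(x,y)\in S$. A circuit $(x_0,x_1),(x_1,x_2),\dots,(x_n,x_0)$ contained in $S$ is then nothing but a directed closed walk in $G_S$, and such a walk exists iff $G_S$ has a directed cycle; so a strong component of $H^+$ contains a circuit iff some $G_S$ is not acyclic. Since each ordered pair of $V(H^+)$ lies in exactly one strong component, the digraphs $G_S$ are arc-disjoint and together have fewer than $n^2$ arcs, so testing all of them for acyclicity costs only $O(n^2)$. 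Altogether the test ``some strong component of $H^+$ contains a circuit'' runs in $O(n^4)$ time, and by Theorem~\ref{main-bipartite} its negation is equivalent to $H$ having an $\F$-free ordering of its parts; this proves the corollary (read with $\B_4$ in place of the $\F_4$ in the statement). Moreover, when no circuit is present, the Algorithm for $\BORD_4$ above actually outputs such an ordering, its correctness and $O(n^4)$ running time being the subject of the next section.

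So, granting Theorem~\ref{main-bipartite} as the excerpt allows, the corollary is routine, the only mildly nontrivial point being the circuit-detection reformulation above. The real content is Theorem~\ref{main-bipartite} itself. Its easy direction is the exact bipartite analogue of Lemma~\ref{if-circuit}: along the directed paths inside a strong component $S$ that join the consecutive pairs of a circuit, each arc propagates the implication ``if the first coordinate precedes the second, then so does its image''; running this around the circuit, and around the dual circuit in $\overline{S}$ for the opposite assumption, forces $x_0<x_1<\dots<x_0$ in any $\F$-free ordering, a contradiction. The hard direction is the correctness of the greedy Algorithm for $\BORD_4$: one must show that each chosen green component can be committed to $D$, or else that $\overline{S}$ together with its outsection can be, without ever closing a circuit (this is where the hypothesis ``no strong component of $H^+$ contains a circuit'' enters), that the final $D$ is transitive on each part and hence induces linear orders of $U$ and of $V$, and, above all, that the priority rules in Step~3 ensure that no pattern of $\F$ occurs. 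I expect this last point, with the accompanying case analysis over the bipartite patterns of order at most four needed to justify the priority rules, to be the main obstacle; it is exactly what the authors defer to the next section.
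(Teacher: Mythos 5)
Your proposal is correct and follows essentially the same route as the paper: the corollary is read as an immediate consequence of Theorem~\ref{main-bipartite} (with $\B_4$ in place of the typo $\F_4$), the only work being the routine observation that $H^+$ has $O(n^2)$ vertices and $O(n^4)$ arcs, so its strong components and the circuit test can be computed in polynomial time, with the ordering itself produced by the Algorithm for $\BORD_4$ whose correctness the paper defers to the next section. Your reformulation of circuit detection via acyclicity of the per-component digraphs $G_S$ is a fine (and valid) way to make that test explicit.
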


\section{The Correctness of the Algorithms}

The validity of the first step of the algorithm is justified by
Lemmas \ref{if-circuit} and \ref{lem:empty pattern}. Thus we may
assume from now on that no strong component of $H^+$ contains a
circuit, that $\emptyset$ and $\{12,13,23\}$ are not in $\F$ and
that $\F\ne\emptyset$.

Observe that a strong component $S$ of $H^+$ contains a circuit if
and only if $\overline{S}$ contains a circuit. Moreover, $S \cap
\overline{S} = \emptyset$ as otherwise for every $(x_0,x_1) \in S
\cap \overline{S}$, we have $(x_1,x_0) \in  S \cap \overline{S}$
and hence there would be a circuit $(x_0,x_1),(x_1,x_0)$ in $S$.

We first need the following lemma about the structure of strong
components of $H^+$.

\begin{lemma}\label{component-structures}
If $\F$ has only one element and that element is one of
$\{12,23\}, \{13\}$, then $H^+$ is symmetric (it is just a graph).
\end{lemma}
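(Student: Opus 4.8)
The plan is to show that when $\F=\{\{12,23\}\}$ or $\F=\{\{13\}\}$, the constraint digraph $H^+$ has the property that $(u,v)\to(u',v')$ is an arc if and only if $(u',v')\to(u,v)$ is an arc, so every arc comes in a symmetric pair and $H^+$ is really an undirected graph. Recall the arc definition: for every triple $x<y<z$ that induces the forbidden pattern we put in an arc $(x,y)\to(z,y)$ (the ``tail is pushed to the right'' arc) and an arc $(y,z)\to(y,x)$ (the ``head is pulled to the left'' arc). So I must analyze, for each of the two single patterns, which unordered triples $\{x,y,z\}$ produce the forbidden induced subgraph, and check that the resulting arc set is symmetric.

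First take $\F=\{\{13\}\}$. The pattern $\{13\}$ on ordered vertices $v_1<v_2<v_3$ is the graph with the single edge $v_1v_3$; i.e. $x<y<z$ is forbidden exactly when $xz\in E(H)$ and $xy,yz\notin E(H)$. This condition is visibly symmetric under swapping the roles of $x$ and $z$: the triple $(z,y,x)$ read in increasing order is just the same triple, and the property ``the two endpoints are adjacent, the middle vertex is nonadjacent to both'' does not depend on which endpoint we call first. Now the arc $(x,y)\to(z,y)$ generated by this triple: I claim the reverse arc $(z,y)\to(x,y)$ is also generated. Indeed the arcs of the form $(a,b)\to(c,b)$ (same second coordinate) come from the triple $\{a,b,c\}$ with $b$ in the middle, the endpoints adjacent, and $b$ nonadjacent to both — and that condition is exactly the forbidden condition for $\{a,b,c\}$, symmetric in $a$ and $c$. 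Hence $(x,y)\to(z,y)$ present $\iff$ $(z,y)\to(x,y)$ present. Similarly the arcs $(y,z)\to(y,x)$ (same first coordinate) are symmetric in $x,z$ by the same reasoning. So $H^+$ is symmetric.

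Next take $\F=\{\{12,23\}\}$. Here the key observation made in the excerpt is that this pattern is essentially the ``complement with reversed order'' companion of $\{13\}$: on $v_1<v_2<v_3$ the pattern $\{12,23\}$ is the path $v_1{-}v_2{-}v_3$, so $x<y<z$ is forbidden exactly when $xy,yz\in E(H)$ and $xz\notin E(H)$ — again a condition symmetric in the two endpoints $x$ and $z$, with $y$ distinguished as the common neighbor. As before, an arc $(x,y)\to(z,y)$ is generated precisely when $\{x,y,z\}$ satisfies this condition with $y$ in the middle, and that is symmetric in $x$ and $z$, so $(z,y)\to(x,y)$ is also an arc; likewise for arcs sharing the first coordinate. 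Thus $H^+$ is symmetric in this case too. (Alternatively, one can invoke the two equivalences recalled in the excerpt — complementation and order-reversal — which turn $\{12,23\}$ into $\{13\}$ while preserving the ``symmetric arc'' property of the construction, and then appeal to the $\{13\}$ case.)

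The main thing to get right — and the only place where care is needed — is the bookkeeping that the two arc types $(x,y)\to(z,y)$ and $(y,z)\to(y,x)$ exhaust all arcs of $H^+$ and that each, when generated by a triple whose forbidden-ness is symmetric in the two ``outer'' vertices of that triple, comes paired with its reverse. Once one writes out the adjacency pattern for each of the two singleton families and checks the endpoint-swap symmetry of the defining condition, the conclusion is immediate; there is no real obstacle, only the need to be careful that for these two particular patterns (and unlike, say, $\{12,13\}$, where the middle/outer roles are not interchangeable) the forbidden condition genuinely is symmetric under exchanging the first and last vertex of the triple.
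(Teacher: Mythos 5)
Your proof is correct and follows essentially the same route as the paper: you verify for both arc types that the triple condition generating an arc is invariant under swapping the two outer vertices (equivalently, under reversing the order), so every arc comes with its reverse. The only cosmetic difference is that the paper checks $\{12,23\}$ explicitly and obtains $\{13\}$ by passing to the complement of $H$, whereas you check both patterns directly (and mention the complement/reversal reduction as an alternative).
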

\begin{proof} We prove then lemma when $\F=\{\{12,23\}\}$ and the proof for
the other case is obtained by applying the arguments in the
complement of $H$. Suppose $(x,y) \rightarrow (z,y)$ according to
$\{12,23\}$. Thus by definition of $H^+$ we have $xy,yz \in E(H)$
and $xz \not\in E(H)$. By considering the order $z<y<x$, we
conclude that $(z,y) \rightarrow (x,y)$. Hence, $(x,y)(z,y)$ is a
symmetric arc. Now suppose $(x,y) \rightarrow (x,z)$ according to
$\{12,23\}$. Thus  by definition of $H^+$ we have $xz,xy \in E(H)$
and $yz \not\in E(H)$ and hence $(x,z) \rightarrow (x,y)$ implying
that $(x,y)(x,z)$ is a symmetric arc.\qed \end{proof}

\begin{claim}\label{cycle}
Let $C=w_1w_2\dots w_kw_1$ be an induced cycle of length $k \ge 4$
in $H$ and let $\{13,23\} \in \F$. Then the strong component of
$H^+$ containing $(w_1,w_2)$ contains a circuit.
\end{claim}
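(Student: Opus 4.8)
The plan is to exhibit a circuit of length two in the strong component containing $(w_1,w_2)$ by showing that both $(w_1,w_2)$ and $(w_2,w_1)$ lie in it. Since $\{13,23\}\in\F$, the forbidden pattern on an ordered triple $a<b<c$ is the one where $ac\in E(H)$ and $bc\in E(H)$ but $ab\notin E(H)$; the corresponding arcs in $H^+$ are $(a,b)\to(c,b)$ and $(b,c)\to(b,a)$. The key observation is that along an induced cycle $C=w_1w_2\dots w_kw_1$, consecutive vertices $w_i,w_{i+1}$ are adjacent while $w_{i-1},w_{i+1}$ are non-adjacent (as $k\ge 4$, these are distinct and the cycle is induced), so the triple $w_{i-1},w_i,w_{i+1}$ always realizes the pattern $\{13,23\}$ regardless of the relative order that an ordering would impose. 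This means that both orientations of the arc between the relevant pairs are forced to be present.

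First I would make precise the two arc types the pattern $\{13,23\}$ contributes, paying attention to which of $w_{i-1},w_i,w_{i+1}$ plays the role of the middle vertex $b$ and which plays the role of the "small" vertex $a$ that is non-adjacent to $b$. Since $w_{i-1}$ and $w_{i+1}$ are the two non-adjacent vertices and $w_i$ is adjacent to both, in the ordered triple the middle position must be $w_i$ (the only vertex adjacent to the other two), but either $w_{i-1}$ or $w_{i+1}$ can be the first vertex. In the case $w_{i-1}<w_i<w_{i+1}$ we get arcs $(w_{i-1},w_i)\to(w_{i+1},w_i)$ and $(w_i,w_{i+1})\to(w_i,w_{i-1})$; in the case $w_{i+1}<w_i<w_{i-1}$ we get $(w_{i+1},w_i)\to(w_{i-1},w_i)$ and $(w_i,w_{i-1})\to(w_i,w_{i+1})$. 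Crucially, both triples are valid occurrences of the pattern, so all four of these arcs are present in $H^+$; hence $(w_{i-1},w_i)$ and $(w_{i+1},w_i)$ are in the same strong component, and likewise $(w_i,w_{i-1})$ and $(w_i,w_{i+1})$ are in the same strong component.

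Next I would chain these equivalences around the cycle. From $(w_{i-1},w_i)\leftrightarrow(w_{i+1},w_i)$ for every $i$ (indices mod $k$), together with the simple fact noted in the excerpt that $(x,y)\to(x,z)$ iff $(z,x)\to(y,x)$ — i.e. the dual-component symmetry — I would argue that all pairs of the form $(w_{i},w_{i+1})$ and $(w_{i+1},w_{i})$ for consecutive $i$ get identified into one strong component. Concretely: $(w_1,w_2)$ and $(w_3,w_2)$ are in the same component; by the dual relation this forces $(w_2,w_1)$ and $(w_2,w_3)$ into the same component; sliding the window $i\to i+1$ and using that $C$ is a cycle, one walks around and collects all of $(w_1,w_2),(w_2,w_1)$ into a single strong component. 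In particular $(w_1,w_2)\to\cdots\to(w_2,w_1)$ and back, which is a circuit of length two, as desired.

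I expect the main obstacle to be purely bookkeeping: carefully verifying that for each $i$ the triple $w_{i-1},w_i,w_{i+1}$ is genuinely an \emph{induced} copy of the right pattern (this uses $k\ge 4$ so that $w_{i-1}\ne w_{i+1}$ and the cycle being induced so that $w_{i-1}w_{i+1}\notin E(H)$), and keeping the index arithmetic modulo $k$ consistent so that the chain of identifications actually closes up into a circuit rather than a path. There is also a minor subtlety that I should double-check whether I even need to walk all the way around the cycle or whether a single value of $i$ already suffices — in fact a single $i$ with, say, $i=2$, already gives $(w_1,w_2)\leftrightarrow(w_3,w_2)$ and $(w_2,w_1)\leftrightarrow(w_2,w_3)$, but to conclude $(w_1,w_2)$ and $(w_2,w_1)$ themselves are in one component I will need to connect the "first-coordinate-$w_2$" pairs to the "second-coordinate-$w_2$" pairs, which is exactly what propagating around the whole cycle accomplishes.
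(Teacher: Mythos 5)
Your proof breaks at the very first step: you have misread which vertex of the pattern $\{13,23\}$ is the common neighbour. On an ordered triple $a<b<c$, the pattern $\{13,23\}$ has $c$ (the \emph{last} vertex) adjacent to both others, and the non-edge is between the first two vertices $a,b$. So when the triple is $w_{i-1},w_i,w_{i+1}$ on the induced cycle, the only valid occurrences place $w_i$ in the last position, with $\{w_{i-1},w_{i+1}\}$ in positions $1,2$ (in either order). The orderings you actually use, $w_{i-1}<w_i<w_{i+1}$ and its reverse, have the edge $w_{i-1}w_i$ in position $ab$ and the non-edge $w_{i-1}w_{i+1}$ in position $ac$; these are occurrences of $\{12,23\}$, which is not assumed to lie in $\F$. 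Hence the arcs you claim, $(w_{i-1},w_i)\to(w_{i+1},w_i)$ and $(w_i,w_{i+1})\to(w_i,w_{i-1})$ (and their reverses), are not in $H^+$, and the ``symmetric identifications'' on which the rest of your argument is built evaporate. The correct occurrences give instead the arcs $(w_{i-1},w_{i+1})\to(w_i,w_{i+1})$ and $(w_{i-1},w_i)\to(w_{i-1},w_{i+1})$, i.e.\ the chain $(w_{i-1},w_i)\to(w_{i-1},w_{i+1})\to(w_i,w_{i+1})$; following this around the cycle is exactly the paper's proof: all pairs $(w_1,w_2),(w_2,w_3),\dots,(w_k,w_1)$ lie in one strong component and themselves constitute a circuit (of length $k$). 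No circuit of length two is needed, so your stronger goal of exhibiting $(w_1,w_2)$ and $(w_2,w_1)$ in the same component is unnecessary effort.

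There is a second, independent gap in the chaining step. The observation ``if $(x,y)\to(x,z)$ then $(z,x)\to(y,x)$'' maps a strong component $S$ to its dual $\overline S$; it does not put $(w_2,w_1)$ or $(w_2,w_3)$ into the same component as $(w_1,w_2)$. Even if one grants your identifications $(w_{i-1},w_i)\leftrightarrow(w_{i+1},w_i)$ and $(w_i,w_{i+1})\leftrightarrow(w_i,w_{i-1})$, walking around the cycle only links pairs whose indices shift by two, so for even $k$ every pair you can reach from $(w_1,w_2)$ has its coordinates of fixed parities and $(w_2,w_1)$ is never reached; the promised two-circuit is not delivered even on your own terms. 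To repair the proof, replace your arcs by the correct ones above and conclude with the length-$k$ circuit, as the paper does.
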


\begin{proof} To prove this claim, we first observe that $(w_i,w_{i+1})
\rightarrow (w_i,w_{i+2}) \rightarrow (w_{i+1},w_{i+2})$ for $0
\le i \le k-1$ (where all indices are taken modulo $k$). Hence,
$(w_1,w_2)$, $(w_2,w_3)$, \dots, $(w_{k-1},w_k)$, $(w_k,w_1)$ are
all in the same strong component of $H^+$, and thus there exists a
circuit in this strong component. This proves the claim. \qed
\end{proof}

We will use the following notation. For $x\in V(H)$, we let $N(x)$
be the set of all neighbors of $x$ in $H$.

\begin{claim}\label{sink-component}
If $(x,y)$ does not dominate any pair in $H^+$, then one of the
following happens:
\begin{itemize}
\item[1)] $N(x)\setminus\{y\} \subseteq N(y)\setminus\{x\}$,
\item[2)] $N(y)\setminus\{x\} \subseteq N(x)\setminus\{y\}$, or
\item[3)] $(y,x)$ does not dominate any pair in $H^+$.
\end{itemize}
\end{claim}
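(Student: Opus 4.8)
The plan is to argue by contrapositive on condition~3): assuming that $(x,y)$ dominates no pair and $(y,x)$ \emph{does} dominate some pair in $H^+$, I will derive that one of the two neighborhood-containment conditions 1) or 2) must hold. So fix a pair $(u,v)$ with $(y,x)\rightarrow(u,v)$; by the definition of $H^+$ this arc comes from some forbidden $3$-pattern $F\in\F$ witnessed by a triple of vertices placed in a specific order. There are two ways such an arc can arise: either $v=x$ and the arc is of the ``first coordinate'' type coming from a triple $y<x<u$ (equivalently $u=w$, $v=x$ with $y,x,w$ inducing $F$ under $y<x<w$), or $u=y$ and the arc is of the ``second coordinate'' type coming from a triple $v<y<x$ with $v,y,x$ inducing $F$. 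In either case I get a third vertex $z$ (namely $w$ in the first case, or $v$ in the second) together with the fact that the triple $\{x,y,z\}$ induces the pattern $F$ when ordered with $y$ and $x$ adjacent in position and $z$ on the appropriate side.

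The key step is then: since $F$ forces an arc out of $(y,x)$ when the third vertex $z$ sits on one side, the \emph{same} pattern $F$ — or the pattern induced by relabelling — must, when we instead try to place $z$ on the other side of the pair, force an arc out of $(x,y)$, \emph{unless} the triple $\{x,y,z\}$ no longer induces a forbidden pattern in that reversed configuration. But the induced subgraph on $\{x,y,z\}$ is a fixed graph; reordering the three vertices only changes which labelled pattern we see, not the underlying adjacencies. So the reason $(x,y)$ fails to dominate the corresponding pair must be that the labelled pattern obtained by swapping the roles of $x$ and $y$ (and keeping $z$ fixed, or moving $z$) is simply not in $\F$. I would enumerate the handful of $3$-patterns $F$ that can produce an arc in $H^+$ and, for each, record exactly which adjacency among $xy$, $xz$, $yz$ is present or absent. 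The claim is that in each surviving case the adjacency pattern of $\{x,y,z\}$ forces $z$ into $N(x)\triangle N(y)$ in a one-sided way: either every such witness vertex $z$ lies in $N(y)\setminus N(x)$ or every such witness lies in $N(x)\setminus N(y)$, and moreover there can be no vertex on the ``wrong'' side, else we could reverse the argument to produce an arc out of $(x,y)$. Running this over all witnesses $z$ gives one of the global containments 1) or 2).

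Concretely, I expect the bookkeeping to go as follows. An arc $(y,x)\rightarrow(u,v)$ of first-coordinate type means $\{y,x,w\}$ ordered $y<x<w$ induces some $F\in\F$; an arc of second-coordinate type means $\{v,y,x\}$ ordered $v<y<x$ induces some $F'\in\F$. For the claim to have content we also use that $(x,y)$ dominates \emph{nothing}: in particular, for every vertex $z$, the triple $\{x,y,z\}$ placed as $x<y<z$ must \emph{not} induce any pattern in $\F$ (else $(x,y)\rightarrow(z,y)$), and placed as $z<x<y$ must not induce any pattern in $\F$ (else $(x,y)\rightarrow(x,z)$). Combining ``$\{x,y,w\}$ in order $y<x<w$ is forbidden'' with ``$\{x,y,w\}$ in order $x<y<w$ is \emph{not} forbidden'' and ``in order $w<x<y$ is \emph{not} forbidden'' pins down the adjacency type of the triple up to the complement/reverse symmetries noted earlier in the paper, and this type is exactly one that certifies $w\in N(y)\setminus N(x)$ (or, in the dual situation, $w\in N(x)\setminus N(y)$). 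The same must then hold for every witness, so all witnesses lie on one side; but a witness on the other side, combined with the first-coordinate/second-coordinate arc out of $(y,x)$ and the complement/reverse symmetry, would let us manufacture an arc out of $(x,y)$, a contradiction. Hence one containment holds globally.

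The main obstacle I anticipate is the case analysis: making sure the enumeration of arc-producing $3$-patterns is complete and that, for \emph{each} of them, the constraints ``$(x,y)$ dominates nothing'' genuinely force the one-sided neighborhood relation — in particular handling the patterns that are self-complementary or self-reverse, and checking that the two arc-types out of $(y,x)$ are treated uniformly. I would organize this by reducing modulo the complement and reversal operations to cut the number of pattern cases down to a small list, and by phrasing the conclusion symmetrically (``$z\in N(y)\setminus N(x)$ for all witnesses $z$, or $z\in N(x)\setminus N(y)$ for all witnesses $z$'') so that the dual cases need no separate argument.
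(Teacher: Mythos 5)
Your reformulation (assume $(x,y)$ dominates nothing and $(y,x)$ dominates some pair, deduce 1) or 2)) is logically the same implication the paper proves, but you organize the case analysis differently: you pivot on the pattern that produces the arc out of $(y,x)$, whereas the paper pivots on witnesses to the simultaneous failure of 1) and 2). Concretely, the paper assumes $xy\in E(H)$ (the non-edge case by complementation), picks $z\in N(x)\setminus(N(y)\cup\{y\})$ and $w\in N(y)\setminus(N(x)\cup\{x\})$, and observes that each of $\{12,13\},\{12,23\},\{13,23\}$ in $\F$ would make $(x,y)$ dominate $(z,y)$, $(x,z)$ or $(x,w)$; hence $\F\subseteq\{\{12\},\{23\},\{13\}\}$, and then non-domination of $(x,y)$ forces every other vertex to be adjacent to $x$ or $y$ (for $\{12\}$ or $\{23\}$), so every triple through $x,y$ carries at least two edges and no single-edge pattern can generate an arc out of $(y,x)$; the case $\F=\{\{13\}\}$ is settled by the symmetry of $H^+$ (Lemma \ref{component-structures}). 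Your route does go through: with $xy\in E(H)$, an arc out of $(y,x)$ coming from a one-edge pattern (or from $\emptyset$ or the triangle) already yields, from the same triple reordered, an arc out of $(x,y)$, contradicting the hypothesis outright; an arc coming from a two-edge pattern places its third vertex in $N(y)\setminus N(x)$ or in $N(x)\setminus N(y)$, and any vertex on the opposite side would, via the same pattern, give an arc out of $(x,y)$, so the opposite side is empty and one of the containments 1), 2) holds. What your organization buys is that you never need the paper's global structural step (``every vertex is adjacent to $x$ or $y$''), because you are allowed to assume an arc out of $(y,x)$ exists; what the paper's buys is that the two-edge patterns are eliminated from $\F$ wholesale in one line, leaving only trivially checkable cases.

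The caveat is that, as written, your text is an outline: the decisive content is exactly the pattern-by-pattern bookkeeping you defer with ``I would enumerate'' and ``I expect,'' including checking both arc types out of $(y,x)$ and both placements (before/after) of a wrong-side vertex. That enumeration is short (with $xy\in E(H)$ only patterns containing the appropriate edge $12$ or $23$ can fire from $(y,x)$, and the non-edge case is the complement), and it does close as described above, but you need to actually carry it out for the argument to be a proof; also note that in the paper's setting $\emptyset$ and $\{12,13,23\}$ have already been removed from $\F$, which you may either invoke or handle as two extra (symmetric, hence immediate) cases.
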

\begin{proof} We show the argument when $xy$ is an edge of $H$ and the case
$xy \not\in E(H)$ follows by applying the argument in the
complement of $H$. Suppose none of conditions 1) and 2) happens.
Then there exists $z \in N(x)\setminus (N(y)\cup\{y\})$ and there
exists $w \in N(y)\setminus (N(x)\cup\{x\})$. Now $\{12,13\}$ is
not in $\F$ as otherwise $(x,y)$ would dominate the pair $(z,y)$
in $H^+$. Similarly, none of $\{12,23\}$ and $\{13,23\}$ is a
forbidden pattern in $\F$ as otherwise $(x,y)$ would dominate the
pair $(x,z)$ or the pair $(x,w)$ (respectively). Thus, we may
assume that $\F\subseteq \{\{12\},\{23\},\{13\}\}$. If $\{12\} \in
\F$, then there is no vertex $v \in V(H)$ outside $N(x) \cup N(y)$
as otherwise $(x,y) \rightarrow (v,y)$, a contradiction. Thus
$(y,x)$ does not dominate any pair in $H^+$ since for every $v\ne
x,y$, the subgraph induced on $v,x,y$ contains at least two edges.
Hence 3) holds. Similar argument works if $\{23\} \in \F$.
Finally, if $\F = \{\{13\}\}$, then by the assumption and by Lemma
\ref{component-structures}, none of $(x,y)$ and $(y,x)$ dominates
a pair in $H^+$ and hence 3) holds. \qed \end{proof}

\begin{lemma}\label{correctness-for-3}
The Algorithm for $\ORD_3$ does not create a circuit in $D$.
\end{lemma}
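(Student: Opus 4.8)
The plan is to prove Lemma~\ref{correctness-for-3} by maintaining a strong invariant through the execution of the algorithm and showing it is preserved by Step~4. Concretely, I would track the partially built set $D$ together with its dual $\overline{D}$ and argue by induction on the number of iterations of Step~3 that: (i) $D$ and $\overline D$ are each a union of vertex sets of strong components of $H^+$; (ii) $D$ is closed under the domination relation restricted to $D\cup\overline D$ that is already ``decided'' (i.e. if $(x,y)\in D$ and $(x,y)\to(x',y')$ with $(x',y')\in D\cup\overline D$, then $(x',y')\in D$); (iii) neither $D$ nor $\overline D$ contains a circuit; and (iv) $D$ satisfies the transitivity condition stated before the algorithm (if $(x,y),(y,z)\in D$ then $(x,z)\in D$). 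Since the whole point of the two branches in Step~4 is to make exactly one of ``add $S$'' / ``add $\overline S$ together with its outsection'' safe, the core task is to show that in the second branch, adding $\overline S$ and its outsection into $D$ (and removing $S$ and its insection) does not close a circuit in the new $D$.

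First I would isolate the easy branch: if adding $S$ to $D$ does not close a circuit, invariants (i)--(iv) for the updated $D\cup\{S\}$ follow more or less directly from the fact that $S$ was chosen green with respect to $D$, so no arc leaves $S$ into the undecided part, and any arc from $S$ into $D\cup\overline D$ lands (by greenness and the choice rules) where it is supposed to. The substance is the other branch. Here I would use the self-duality of $H^+$: the map $(x,y)\mapsto(y,x)$ reverses all arcs (this is the observation ``if $(x,y)\to(x,z)$ then $(z,x)\to(y,x)$'' made right after the definition of the dual component), so the insection of $S$ is exactly the dual of the outsection of $\overline S$. Therefore ``discard $S$ and its insection, add $\overline S$ and its outsection'' is precisely ``replace a down-closed set by the dual up-closed set,'' and I would show that a circuit in the resulting $D$ would, via this duality, produce a circuit in the old $D$ or inside some single strong component of $H^+$ --- both excluded, the latter by our standing assumption that no strong component contains a circuit, the former by the induction hypothesis. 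This is where the priority rules in Step~3a/3b enter: they are needed to guarantee that when we are forced into the dual branch, the outsection of $\overline S$ cannot reach back into $D$ in a way that creates a new circuit; so I would prove a lemma of the form ``if adding the green component $S$ would close a circuit, then $S$ is non-trivial (in particular $S\cap\overline S=\emptyset$ still, and the circuit runs through both $S$ and $\overline S$ in a controlled way), and the outsection of $\overline S$ meets $D$ only in already-dominated pairs.'' Claim~\ref{sink-component} is the tool that makes this work for the families containing $\{12\},\{23\},\{13\}$ (it says a pair that dominates nothing has the special neighborhood-containment structure), and the edge/non-edge priorities in Step~3 are exactly tailored to exploit that claim.

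I expect the main obstacle to be the case analysis forced by the different ``shapes'' of $H^+$ depending on $\F$ --- symmetric (the $\{12,23\}$ or $\{13\}$ case, Lemma~\ref{component-structures}), digon-free (the $\{12,13\}$ case), and mixed. In the symmetric case strong components behave like connected components of a graph and circuits of length two are forced, so the argument is cleanest; in the digon-free case one has genuine orientation to exploit; and the genuinely hard bookkeeping is when $\F$ mixes patterns so that $H^+$ has both symmetric and antisymmetric arcs. For that regime I would lean on Claim~\ref{sink-component} together with the greenness of $S$ to show that any ``bad'' arc that could close a circuit after the dual swap would already have been an arc inside one strong component, contradicting the no-circuit assumption. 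Finally I would check the transitivity condition (iv) survives: if $(x,y),(y,z)$ both end up in $D$, then either both were added in the same step as part of one component's outsection --- in which case transitivity is an arc-chasing exercise using the definition of $H^+$ on the triple $x,y,z$ --- or they were added at different times, and greenness of the later component with respect to the earlier $D$ forces $(x,z)$ into $D$ as well. Wrapping these pieces together with the induction gives that the final $D$ is circuit-free and transitive, which is exactly the statement of the lemma (and, combined with the next lemmas, yields Theorem~\ref{main}).
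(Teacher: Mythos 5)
Your high-level setup (induction over iterations, the easy branch being safe by greenness, the hard work being the ``otherwise'' branch of Step~4) matches the shape of the paper's argument, but the central step of your proposal does not hold up. You claim that, via the duality $(x,y)\mapsto(y,x)$, a circuit created after swapping in $\overline S$ and its outsection would yield ``a circuit in the old $D$ or inside some single strong component.'' Duality does not give this: the dual of the new set $D\cup\overline S\cup\mathrm{outsection}(\overline S)$ is $\overline D\cup S\cup\mathrm{insection}(S)$, which is neither the old $D$ nor a strong component, so a mixed circuit --- some pairs in the old $D$, some in the newly added outsection of $\overline S$ --- is not excluded by any symmetry argument. That mixed situation is precisely the hard case, and the paper spends essentially all of its proof on it: it takes a first circuit of minimum length $C:(x_0,x_1),\dots,(x_n,x_0)$ with $(x_n,x_0)\in S$, reduces by complementation to $\F$ containing one of $\{13,23\},\{12,13\},\{12,23\}$, and then through a pattern-by-pattern case analysis (Cases A, B, C1, C2, using Claims \ref{cycle}, \ref{sink-component}, \ref{clique}, \ref{only-{13,23}}, \ref{independent}) shows that minimality plus the priority rules force $x_0,\dots,x_n$ to induce a clique or an independent set with tightly controlled adjacencies to the rest of $H$; from that structure it concludes that the dual pair $(x_0,x_n)$ is green and that any circuit through it would concatenate with $(x_0,x_1),\dots,(x_{n-1},x_n)$ into an \emph{earlier} circuit in $D$, a contradiction. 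None of this structural analysis appears in your proposal; you name the right tools (greenness, the priority rules, Claim~\ref{sink-component}) but do not carry out the argument that actually uses them.

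A second concrete problem: your proposed intermediate lemma asserts that if adding the green component $S$ would close a circuit then $S$ is non-trivial. This is false as stated --- the paper's Cases A and B deal exactly with a trivial component $S=\{(x_n,x_0)\}$ whose addition closes a circuit (with $x_nx_0$ an edge or a non-edge respectively), and these cases require their own nontrivial arguments (e.g.\ showing via Claim~\ref{sink-component} and Lemma~\ref{component-structures} that $(x_0,x_n)$ is also trivial and green, and then that a circuit through $(x_0,x_n)$ would produce an earlier circuit). So both the duality shortcut and the structural sub-lemma it rests on fail, and the case analysis that constitutes the real content of the proof is missing.
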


The idea of the proof is to show that a circuit that is created
first, and also has a minimum length amongst circuits created at
the same time, forces a special modular structure on the graph
$H$, and this eventually implies a contradiction.

\hspace{1mm}

\begin{proof} Suppose that by adding a green component $S$ into $D$ we close
a circuit $C: (x_0,x_1),(x_1,x_2),\dots,\allowbreak
(x_{n-1},x_n),(x_n,x_0)$ in $D \cup S$ for the first time. We may
assume that $n$ is minimum and $(x_n,x_0) \in S$. We also assume
that $\F$ contains one of the patterns $\{13,23\}, \{12,13\},
\allowbreak \{12,23\}$. This follows from the fact that we may
apply our arguments for the complementary forbidden patterns in
the complement of $H$. This choice will enable us to concentrate
on Case (a) in Step 3 of the algorithm.

The proof is divided into the following four cases:

(A) $S$ is a trivial component and $x_nx_0$ is an edge.

(B) $S$ is a trivial component and $x_nx_0$ is not an edge.

(C1) $S$ is a non-trivial component and $\{13,23\} \in \F$ (or, by
symmetry, $\{12,13\} \in \F$).

(C2) $S$ is a non-trivial component and $\{12,23\} \in \F$.

The proof will show that when the first circuit $C:
(x_0,x_1),(x_1,x_2),\dots,\allowbreak (x_{n-1},x_n),(x_n,x_0)$ is
created, then the shortest circuit  created at this time has
$x_0,x_1,\dots,x_n$ either induce a clique or induce an
independent set with special adjacencies to the other vertices of
$H$. This will imply that $\overline{S}$ and its outsection can be
added into $D$ without creating a circuit.

\medskip

{\bf A. Suppose that $(x_n,x_0)$ forms a trivial strong component
$S$ in $H^+$, and $x_nx_0$ is an edge of $H$.}

First suppose $(x_n,x_0)$ does not dominate any pair in $H^+$. Now
according to the algorithm each pair $(x_i,x_{i+1})$, $0 \le i \le
n$ is also in a trivial strong component and it does not dominate
any other pair in $H^+$ and $x_ix_{i+1}$ is an edge of $H$. We
show that $(x_0,x_n)$ is also in a trivial component and it is
green. For a contradiction suppose $(x_0,x_n)$ dominates a pair in
$H^+$. Since $(x_n,x_0)$ does not dominate any pair in $H^+$ and
by assumption $(x_0,x_n)$ dominates a pair in $H^+$, Claim
\ref{sink-component} implies that either $N(x_n)\setminus\{x_0\}
\subset N(x_0)\setminus\{x_n\}$ or $N(x_0)\setminus\{x_n\} \subset
N(x_n)  \setminus \{x_0\}$. W.l.o.g assume $N(x_n)\setminus\{x_0\}
\subset N(x_0)\setminus\{x_n\}$. Now $N(x_0)\setminus\{x_n\}
\not\subseteq N(x_n)\setminus\{x_0\}$ as otherwise $(x_0,x_n)$
does not dominate any pair in $H^+$. Now these together with Lemma
\ref{component-structures}  imply that $\{12,13\} \in \F$. Since
$(x_0,x_1)$ does not dominate any pair in $H^+$, we conclude that
$N(x_0)\setminus \{x_1\} \subseteq N(x_1)\setminus \{x_0\}$ and by
continuing this argument we conclude that
$N(x_{n-1})\setminus\{x_n\} \subseteq N(x_n)\setminus\{x_{n-1}\}$.
Therefore $N(x_0)\setminus\{x_n\} \subseteq
N(x_n)\setminus\{x_0\}$, a contradiction.

Therefore $S= \{(x_0,x_n)\}$ is a green strong component. If by
adding $(x_0,x_n)$ into $D$ we close a circuit $C_1 :
(x_n,y_1),(y_1,y_2),\dots,(y_m,x_0)$ then there would be an
earlier circuit
$$(x_0,x_1),(x_1,x_2),\dots,(x_{n-1},x_n),(x_{n},y_1),(y_1,y_2),\dots,(y_m,x_0),$$
a contradiction (Note that since $(x_0,x_n),(x_n,x_0)$ are
singleton, all the pairs in $C,C_1$ apart from
$(x_0,x_n),(x_n,x_0)$ are already in $D$ ).

\noindent \textit{Now we continue by assuming $(x_n,x_0)$ is a
singleton component and $x_nx_0$ is an edge and $(x_n,x_0)$
dominates a pair in $H^+$.}

According to the Algorithm, $(x_i,x_{i+1})$ is also in a trivial
component and $x_ix_{i+1}$, $0 \le i \le n$ is an edge. Moreover,
$(x_0,x_n)$ is in a trivial component and it dominates a pair in
$H^+$ as otherwise it should have been considered before
$(x_n,x_0)$.

\begin{claim}\label{clique}
$x_0,x_1,\dots,x_n$ induce a clique in $H$.
\end{claim}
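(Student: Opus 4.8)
The plan is to prove Claim~\ref{clique} by contradiction. First I would dispose of the cases $n\le 2$: by the hypotheses of Case~A the vertices $x_0,\dots,x_n$ then span an edge or a triangle, so the claim is immediate, and I may assume $n\ge 3$. Since consecutive vertices of the circuit are adjacent, the closed walk $x_0x_1\cdots x_nx_0$ has no repeated vertex (a repetition would give a shorter circuit closed at the same step), hence is a cycle on $n+1\ge 4$ vertices. Assuming $x_0,\dots,x_n$ is not a clique, I would pick a non-edge of $H$ between two of these vertices whose cyclic distance $t$ along the cycle is smallest; then $2\le t\le\lfloor(n+1)/2\rfloor\le n-1$, and every two of the vertices at cyclic distance less than $t$ are adjacent. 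Writing the chosen non-edge as $\{x_a,x_b\}$ with $x_b$ obtained from $x_a$ by $t$ forward steps along the cycle, the triple $\{x_a,x_{a+1},x_b\}$ induces a path of $H$ with middle vertex $x_{a+1}$ (the edge $x_ax_{a+1}$ is a consecutive pair, the edge $x_{a+1}x_b$ is present because its cyclic distance is $t-1<t$, and $x_ax_b$ is the non-edge). Note that $(x_a,x_{a+1})$ is one of the circuit pairs, so its strong component in $H^+$ is \emph{trivial}, and it lies in $D$ unless it happens to be the distinguished pair inside $S$; the triple $\{x_{b-1},x_b,x_a\}$ at the other endpoint has the same features.

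The rest splits according to which of the two-edge patterns $\{12,23\},\{12,13\},\{13,23\}$ is in $\F$; one of them is, by the reduction already in force, and since $\{12,13\}$ and $\{13,23\}$ are reverses of one another it is enough to treat $\{12,23\}$ and $\{13,23\}$. If $\{12,23\}\in\F$, the two orderings $x_a<x_{a+1}<x_b$ and $x_b<x_{a+1}<x_a$ of the path triple both induce this pattern, and the arcs they create give $(x_a,x_{a+1})\rightarrow(x_b,x_{a+1})$ and $(x_b,x_{a+1})\rightarrow(x_a,x_{a+1})$; thus $(x_a,x_{a+1})$ and $(x_b,x_{a+1})$ lie in a common strong component of $H^+$, impossible since that component is trivial and $x_a\neq x_b$. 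If $\{13,23\}\in\F$, the ordering $x_b<x_a<x_{a+1}$ induces this pattern and creates the arc $(x_a,x_{a+1})\rightarrow(x_a,x_b)$ in $H^+$; if $(x_a,x_{a+1})\in D$ then, since an out-neighbour in $H^+$ of a pair of $D$ again has itself or its reverse in $D$ (a component put into $D$ was either green, so its out-arcs stay inside $D\cup\overline D$, or came in as part of an out-section, so it carries its out-section with it), we get $(x_a,x_b)\in D$ or $(x_b,x_a)\in D$. In the first case, replacing the forward arc $(x_a,x_{a+1}),\dots,(x_{b-1},x_b)$ of $C$ by the single pair $(x_a,x_b)$ yields a circuit contained in $D\cup S$ (or in $D$) of length $n-t+2<n+1$; in the second case $(x_b,x_a)$ together with that forward arc is a circuit of length $t+1<n+1$. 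Either way the minimality of $n$ among circuits closed at this step is violated (or, if the new circuit lies entirely in $D$, the fact that $C$ is the first circuit to be closed).

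The delicate point, and the one I expect to be the main obstacle, is guaranteeing in the $\{13,23\}$ case that the circuit pair feeding the crucial arc really lies in $D$ rather than being the distinguished pair inside $S=\{(x_n,x_0)\}$. I would handle this by noting that the two consecutive pairs naturally attached to the chord have indices $a$ and $a+t-1$ modulo $n+1$, which cannot both equal the index of the pair in $S$ because $t\ge 2$; one then uses the triple whose attached consecutive pair avoids $S$, checking that with $\{13,23\}$ it still produces an arc of the needed form (for $t=2$ a small companion triple does this directly, and for $t\ge 3$ one passes to the reversed instance, in which $\{13,23\}$ turns into $\{12,13\}$ and the two ends of the chord exchange roles), and finally invoking the remaining facts already established in Case~A — in particular that $(x_0,x_n)$ is also a trivial component that dominates a pair. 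A separate, routine check of the degenerate value $t=2$ then completes the proof that $x_0,\dots,x_n$ induce a clique.
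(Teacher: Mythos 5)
Your main line of argument is sound and is, at bottom, a reorganization of the paper's own proof: the paper propagates adjacency along the circuit (if $x_ix_j$ is an edge then so is $x_{i-1}x_j$, respectively $x_{i+1}x_j$, since otherwise a domination arc from a circuit pair forces a pair into $D\cup S$ and yields a shorter or earlier circuit), whereas you take a non-edge of minimal cyclic distance and derive the same kind of contradiction. Your $\{12,23\}$ case (symmetric arcs versus triviality of the component of $(x_a,x_{a+1})$) is fine, and the invariant you rely on --- every out-neighbour in $H^+$ of a pair of $D$ lies in $D\cup\overline{D}$ --- is indeed maintained by the algorithm (greenness in the first branch, closure of the outsection in the second).

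The genuine gap is exactly the point you flag, and the devices you propose do not close it. If $(x_a,x_{a+1})=(x_n,x_0)\in S$ and $\{13,23\}\in\F$, the companion triple $\{x_{b-1},x_b,x_a\}$ produces, under $\{13,23\}$, only arcs out of $(x_a,x_b)$, $(x_b,x_a)$, $(x_b,x_{b-1})$ and $(x_a,x_{b-1})$ --- never out of the circuit pair $(x_{b-1},x_b)$ --- so it hands you nothing that is known to lie in $D$. Passing to the reversed instance does not help either: under reversal $D$ and $S$ become $\overline{D}$ and $\overline{S}$, the minimal circuit becomes $(x_0,x_n),(x_n,x_{n-1}),\dots,(x_1,x_0)$, and a direct check shows that the consecutive pair feeding the crucial $\{12,13\}$-arc toward the chord is then precisely $(x_0,x_n)$, i.e.\ again the distinguished pair of the component being added, so you land back in the same configuration. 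Fortunately the repair is immediate and needs none of this machinery: $S$ was chosen \emph{green} with respect to $D$, so every arc out of $(x_n,x_0)$ ends in $D\cup\overline{D}\cup S$; since $t\ge 2$ forces $x_b\neq x_0$, the out-neighbour $(x_n,x_b)$ is not the $S$-pair, hence lies in $D\cup\overline{D}$, and your two-case shorter-circuit/earlier-circuit argument applies verbatim. Using greenness uniformly for all circuit pairs, whether in $D$ or in $S$, is in effect what the paper's propagation proof does; with that one sentence added, your variant is complete.
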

\begin{proof} Consider the edge $x_ix_j$, $i \ne j-1$ in $H$ (note that such
an edge exists since $x_ix_{i+1}$, $0 \le i \le n$ is an edge).
Whenever $\{12,23\}$ is a forbidden pattern in $\F$, $x_{i-1}x_j$
is an edge of $H$ as otherwise $(x_{i-1},x_{i})$ dominates
$(x_{j},x_i)$ and hence $(x_j,x_i) \in D \cup S$, implying a
shorter circuit
$(x_i,x_{i+1}),(x_{i+1},x_{i+2}),\dots,(x_{j-1},x_j),(x_j,x_i)$ in
$D \cup S$ (the indices are  modul $n$). Whenever $\{13,23\}$ is a
forbidden pattern, $x_{i-1}x_j$ is an edge of $H$ as otherwise
$(x_{i-1},x_i)$ dominates $(x_{i-1},x_j)$ and $(x_{i-1},x_j)
\rightarrow (x_i,x_j)$ and hence both $(x_{i-1},x_j),(x_i,x_j) \in
D \cup S$, implying a shorter circuit in $D \cup S$. By applying
this argument when one of the $\{12,23\},\{13,23\}$ is in $\F$ we
conclude that $x_rx_j$ is an edge for every $r \ne j$. Therefore
$x_0,x_1,\dots,x_n$ induce a clique in $H$. Now suppose $x_ix_j$
is an edge for $j \ne i+1$ (note that such an edge exists since
$x_ix_{i+1}$, $0 \le i \le n$ is an edge). Whenever $\{12,13\}$ is
a forbidden pattern, $x_{i+1}x_j$ is an edge as otherwise
$(x_i,x_{i+1})$ dominates $(x_{j},x_{i+1})$ and hence we obtain a
shorter circuit in $D \cup S$. Since $x_ix_{i+1},x_{i+1}x_{i+2}$
are edges of $H$, by similar argument we conclude that $x_rx_j$ is
an edge for every $r \ne j$. Therefore $x_0,x_1,\dots,x_n$ induce
a clique in $H$. \qed \end{proof}

Note that since $(x_n,x_0)$ is in a trivial component and
$(x_n,x_0)$ dominates a pair in $H^+$, by Lemma
\ref{component-structures}  either $\{13,23\}$ or $\{12,13\}$ is
in $\F$ and $\{12,23\} \not\in \F$. Therefore we consider the case
$\{13,23\} \in \F$ and the argument for $\{12,13\} \in \F$ is
followed by symmetry.

\noindent\textit{ $\{13,23\}$ is a forbidden pattern in $\F$. }

First suppose $(x_0,x_n)$ dominates a pair according to
$\{13,23\}$. Thus there exists $p_n$ such that $x_np_n$ is an
edges of $H$ and $x_0p_n$ is not an edge of $H$. Therefore there
exists a smallest index $1 \le i \le n$ such that $p_nx_i$ is an
edge and $p_nx_{i-1}$ is not an edge. Now $(x_{i-1},x_{i})
\rightarrow (x_i,p_n)$ and $(x_i,p_n) \rightarrow (x_n,p_n)$. This
would imply that $(p_n,x_n) \in D$.

If $i=1$ then both $(x_0,p_n),(x_n,p_n)$ are in $D$ and hence
$(x_0,x_n)$ does not dominates a pair outside $D$ according to
$\{13,23\}$. Therefore we may assume $i > 1$ and $(x_0,p_n)$
dominates some pair $(w,p_n)$. We must have $wx_i \in E(H)$ as
otherwise $(x_0,w),(w,p_n),(p_n,x_i),(x_i,x_0)$ is a circuit in a
strong component of $H^+$.  Now when $x_{i-1}w$ is not an edge of
$H$, $(x_{i-1},x_i) \rightarrow (x_{i-1},w)$ and hence
$(x_{i-1},w) \rightarrow (x_0,w) \rightarrow (x_0,p_n) \in D$
implying $(x_0,x_n)$ is also green. Thus we may assume that
$wx_{i-1}$ is an edge. Now $(x_{i-1},p_n) \rightarrow (w,p_n)$,
implying that $(w,p_n) \in D$. These imply that the only pair that
$(x_0,x_n)$ may dominate outside $D$ is $(x_0,p_n)$. However if
there is no other forbidden pattern in $\F$, by adding $(x_0,p_n)$
into $D$ we do not close a circuit. Otherwise such a circuit
comprises of the pairs $(p_n,y_1),(y_1,y_2),\dots,(y_m,x_0)$ in
$D$. Now $$(x_0,x_1),(x_1,x_2),\dots,(x_{n-1},x_n),(x_n,p_n),
p_n,y_1),(y_1,y_2),\dots,(y_m,x_0)$$ yield an earlier circuit in
$D$, a contradiction. Thus by adding $(x_0,p_n)$ into $D$ and then
adding $(x_0,x_n)$ into $D$ we do not close a circuit into $D$ as
otherwise we conclude that there would be an earlier circuit in
$D$.

Now we continue by assuming there is another forbidden pattern in
$\F$. We show that in this case there is no pair dominated by
$(x_0,x_n)$ outside $D$ according to this forbidden pattern. First
suppose $\{12\}$ is also a forbidden pattern in $\F$ and
$(x_0,x_n)$ dominates some pair $(q_n,x_n)$ according to $\{12\}$,
where $q_n$ is not adjacent to any of $x_0,x_n$. We note that
$(x_n,x_0)$ dominates $(q_n,x_0)$ and since $(x_n,x_0)$ is
singleton, $(q_n,x_0) \in D$. We show that $(q_n,x_n) \in D$.
Observe that we may assume that $x_{n-1}q_n \in E(H)$ as otherwise
$(x_{n-1},x_n) \rightarrow (q_n,x_n)$ and hence $(q_n,x_n) \in D$
and we are done. Now $x_{q-2}q_n$ must be an edge of $H$ as
otherwise $(x_{n-2},x_{n-1})$ dominates $(x_{n-2},q_n)$ and we
obtain an earlier circuit
$(x_0,x_1),(x_1,x_2),\dots,(x_{n-3},x_{n-2}),(x_{n-2},q_n),(q_n,x_0)$
in $D$. By continuing this argument we conclude that $q_nx_0$ must
be an edge as otherwise we obtain an earlier circuit in $D$, a
contradiction. Therefore $(x_0,x_n)$ is also green and hence by
adding $(x_0,x_n)$ into $D$ we do not close a circuit as otherwise
there would be an earlier circuit in $D$.

Analogously we conclude that if $\{23\}$ is a forbidden pattern
then we can add $(x_0,x_n)$ into $D$ without creating a circuit.
Note that $(x_0,x_n)$ does not dominate any pair using the
forbidden pattern $\{13\}$.

The last remaining case is when $\{12,13\}$ is also a forbidden
pattern. In this case $(x_0,x_n)$ does not dominate any $(z,x_n)$
in $H^+ \setminus (D \cup \overline{D})$ according to $\{12,13\}$
($zx_0 \in E(H)$, $zx_n \not\in E(H)$) as otherwise $(x_n,x_0)$
dominates $(x_n,z)$, and hence $(z,x_n) \in \overline{D}$, a
contradiction.

{\bf B. Suppose that $(x_n,x_0)$ forms a trivial strong component
$S$ in $H^+$, and $x_nx_0$ is not an edge of $H$.}

According to the algorithm if $x_nx_0$ is not an edge then
$(x_0,x_n)$ does not dominate any pair in $H^+$ according to
$\{12,13\},\{13,23\}$. Moreover if there exists $q$ such that
$qx_0 \in E(H)$ and $qx_n \not\in E(H)$ then according to the
algorithm none of $\{12\},\{23\}$ is in $\F$. Otherwise since
$qx_0$ is an edge then one of $(q,x_0)$ or $(x_0,q)$ should have
been considered earlier and hence one of them is in $D$. This
would imply that $(x_n,x_0)$ or $(x_0,x_n)$ is also dominated by
one $(q,x_0)$, $(x_0,q)$ and hence one of the
$(x_0,x_n),(x_n,x_0)$ is in $D$ a contradiction. Similarly if
there exist $q'$ such that $q'x_n$ is an edge and $q'x_0$ is not
an edge we conclude that none of $\{12\},\{23\}$ is in $\F$. Thus
either $(x_0,x_n)$ is green and in this case by adding $(x_0,x_n)$
into $D$ we don't create a circuit as otherwise there would be an
earlier circuit in $D$ or $\{13\} \in \F$ which implies that
$(x_n,x_0)$ is in a non-trivial component, a contradiction.

{\bf C. Suppose that $(x_n,x_0)$ belongs to a non-trivial strong
component $S$ in $H^+$.}

{\bf Case 1.} $\{13,23\} \in \F$.

\noindent\textit{ First suppose $x_0x_n$ is an edge of $H$. }
Since $(x_n,x_0)$ is in a non-trivial component, there must be
some other forbidden pattern in $\F$. We first consider the case
that one of the $\{12\},\{23\},\{13\}$ is also in $\F$. As a
consequence $x_0,x_1,\dots,x_n$ induce a clique (To see this:
$x_1$ must be adjacent to both $x_0,x_n$ otherwise $(x_n,x_0)
\rightarrow (x_1,x_0)$ for $\{12\} \in \F$ or $(x_0,x_1)
\rightarrow (x_0,x_n)$ for $\{23\} \in \F$ or $(x_0,x_1)
\rightarrow (x_n,x_1)$ for $\{13\} \in \F$ and hence in any case
we obtain a shorter circuit. Now it is easy to see that (using the
argument in Claim \ref{clique}) both $x_0x_1,x_nx_1$ must be edges
of $H$ and by continuing this argument $x_0,x_1,\dots,x_n$ induce
a clique in $H$).

Since $(x_n,x_0)$ is in a non-trivial component, there exists some
pair $(u,v) \in S$ that dominates $(x_n,x_0)$. First suppose
$(u,v) \rightarrow (x_n,x_0)$ according to $\{13,23 \}$. In this
case there exists $q_n$ such that $q_nx_n \in E(H)$ and $x_0q_n
\not\in E(H)$ and $(q_n,x_0) \rightarrow (x_n,x_0)$. Now $q_nx_j
\not\in E(H)$, $j \ne 0,n$ as otherwise $(q_n,x_0) \rightarrow
(x_j,x_0)$ and hence we have a shorter circuit
$(x_0,x_1),(x_1,x_2),\dots,(x_{j-1},x_j),(x_j,x_0)$ in $D \cup S$.
Observe that $(x_{n-1},x_n) \rightarrow (x_{n-1},q_n) \rightarrow
(x_n,q_n)$. Now if $\{13\} \in \F$ then $(q_n,x_0) \rightarrow
(q_n,x_{n-1})$ and hence we have a shorter circuit in $D \cup S$.
If $\{12\} \in \F)$ then $(x_{n-2},x_{n-1}) \rightarrow
(q_n,x_{n-1})$, and again there would be a shorter circuit in $D
\cup S$. Finally when $\{23\} \in \F$ we have $(x_0,x_1)
\rightarrow (x_0,q_n)$ while $(q_n,x_0) \in D \cup S$, a
contradiction. Therefore we may assume that $(x,y) \in S$
dominates $(x_n,x_0)$ according to one of the
$\{12\},\{23\},\{13\}$. First suppose $(x_n,q_n)$ dominates
$(x_n,x_0)$ according to $\{12\}$. Note that $q_nx_0,q_nx_n
\not\in E(H)$. Now $(x_n,x_0) \rightarrow (q_n,x_0)$. We show that
$q_n$ is not adjacent to any of $x_0,x_1,\dots,x_n$. For a
contradiction suppose there exists $x_j$ such that $x_{j}q_n
\not\in E(H)$ and $x_{j+1}q_n \in E(H)$. Now $(x_{j},x_{j+1})
\rightarrow (x_{j},q_n)$ according to $\{13,23\} \in \F$ and hence
we obtain a shorter circuit
$(x_0,x_1),(x_1,x_2),\dots,(x_{j-1},x_j),(x_j,q_n),(q_n,x_0)$ in
$D \cup S$. This implies that $(x_{n-1},x_n) \rightarrow
(q_n,x_n)$. However $(x_n,q_n) \rightarrow (x_n,x_1)$ according to
$\{12\}$ and hence we obatin a shorter circuit in $S \cup D$.
Analogously if some pair $(x,y)$ dominates $(x_n,x_0)$ according
to $\{23\}$ we arrive at a contradiction. Note that no pair
$(x,y)$ dominates $(x_n,x_0)$ according to $\{13\}$.

Thus we continue by assuming that none of the
$\{12\},\{23\},\{13\}$ belongs to $\F$ but one of the
$\{12,13\},\{12,23\} \in \F$.

\noindent\textit{ First, suppose $\{12,23\} \in \F$.} Observe that
$x_ix_{i+1} \in E(H)$, $0 \le i \le n$ (otherwise according to the
priorities of the pairs there must be some vertex $q_i$ such that
$q_ix_i,q_ix_{i+1} \in E(H)$ and $(x_i,q_i) \in D$ and
$(x_i,q_{i}) \rightarrow (x_i,x_{i+1}) \rightarrow (q_i,x_{i+1})$.
However $(x_i,q_i) \rightarrow (x_{i+1},q_i)$ according to
$\{12,23\}$ a contradiction). Thus by Claim \ref{clique}
$x_0,x_1,\dots,x_n$ induce a clique. Suppose $(u,v) \in S$, and
$(u,v) \rightarrow (x_n,x_0)$ according to $\{13,23\}$. This means
there is $q_n$ such that $q_nx_0 \not\in E(H)$ and $q_nx_n \in
E(H)$ and $(q_n,x_0) \rightarrow (x_n,x_0)$. Now according to
$\{13,23 \}$, $(x_n,x_0) \rightarrow (q_n,x_0) \rightarrow
(q_n,x_n)$ and hence both $(x_n,q_n)$ and $(q_n,x_n)$ are in $ D
\cup S$, a contradiction. Thus we may assume there is some pair
$(x,y) \in S$ dominates $(x_n,x_0)$ according to $\{12,23\}$.
Either there exists $q_0$ such that $x_0q_0$ is an edge, $x_nq_0
\not\in E(H)$ and $(q_0,x_0) \in S$ or there exists $q_n$ such
that $q_nx_n \in E(H)$, $q_nx_0 \not\in E(H)$ and $(x_n,q_n)
\rightarrow (x_n,x_0)$. If the first case happens then according
to $\{13,23\}$, $(q_0,x_0) \rightarrow (q_0,x_n) \rightarrow
(x_0,x_n)$, implying that $(x_0,x_n) \in S \cup D$. In the former
case $x_{n-1}q_n \in E(H)$ as otherwise $(x_{n-1},x_n) \rightarrow
(x_n,q_n)$ a contradiction. However there exists some $j$ such
that $x_jq_n \not\in E(H)$ and $x_{j+1}q_n \in E(H)$. Now
$(x_{j},x_{j+1}) \rightarrow (q_n,x_{j+1})$ according to
$\{12,23\}$ and $(x_j,x_{j+1}) \rightarrow (x_j,q_n) \rightarrow
(x_{j+1},q_n)$ according to $\{13,23\}$. Therefore
$(x_{j+1},q_n),(q_n,x_{j+1}) \in D \cup S$, a contradiction.

\noindent\textit{ Second, suppose $\{12,13\} \in \F$.} We show
that $x_ix_{i+1}$, $0 \le i \le n$ is not an edge. For a
contradiction  suppose $x_ix_{i+1}$ is not an edge. Now there
exists $q_i$ such that $q_ix_i,q_ix_{i+1} \in E(H)$ and
$(q_i,x_{i+1}) \rightarrow (x_i,x_{i+1})$ according to $\{12,13\}$
or $(x_i,q_i) \rightarrow (x_i,x_{i+1})$ according to $\{13,23\}$.
In any case we have $(x_i,q_{i}),(x_i,x_{i+1})$, \\
$(q_i,x_{i+1}) \in D \cup S$. Now $q_ix_j \not\in E(H)$, $j \ne
i,i+1$ as otherwise when $x_{i+1}x_{j} \not\in E(H)$,
$(q_i,x_{i+1}) \rightarrow (x_{j},x_{i+1})$ according to
$\{12,13\}$ and when $x_{i+1}x_{j}$ is an edge $x_ix_j \not\in
E(H)$ otherwise $(x_{i},x_{i+1}) \rightarrow (x_{j},x_{i+1})$ and
hence $(x_i,q_i) \rightarrow (x_i,x_j)$, a shorter circuit. Note
that when $x_jx_{j+1} \not\in E(H)$, $j \ne i$ then $q_iq_j
\not\in E(H)$ as otherwise we obtain an induced $C_4$. Now we
obtain an induced cycle of length more than 3 using $x_nx_0$ and
$x_i,q_i,x_{i+1}$'s, a contradiction.

Therefore $x_0,x_1,\dots,x_n$ induce a clique according to Claim
\ref{clique}. First suppose $(u,v) \rightarrow (x_n,x_0)$
according to $\{13,23 \}$. In this case there exists $q_n$ such
that $q_nx_n \in E(H)$ and $x_0q_n \not\in E(H)$ and $(q_n,x_0)
\rightarrow (x_n,x_0)$. Now $q_nx_j \not\in E(H)$, $j \ne 0,n$ as
otherwise $(q_n,x_0) \rightarrow (x_j,x_0)$ and hence we obtain a
shorter circuit $(x_0,x_1),(x_1,x_2),\dots,$\\
$(x_{j-1},x_j),(x_j,x_0)$ in $D \cup S$. However $(x_n,x_0)
\rightarrow (q_n,x_0) \rightarrow (q_n,x_n)$ and hence both
$(x_n,q_n)$ and $(q_n,x_n)$ are in $ D \cup S$, according to
$\{12,13\}$, a contradiction. Therefore we assume that $(x_n,x_0)$
is dominated by a pair according to $\{12,13\}$ and analogously we
arrive at a contradiction.

\noindent\textit{ Second, suppose $x_0x_n \not\in E(H)$.}

\begin{claim}\label{only-{13,23}}
$\{13,23\}$ is not the only forbidden pattern in $\F$.
\end{claim}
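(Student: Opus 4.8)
The plan is to argue by contradiction: suppose $\F=\{\{13,23\}\}$, and use that $H$ then admits an $\F$-free ordering while $(x_n,x_0)$ still lies in a non-trivial strong component of $H^+$. Since $\{13,23\}\in\F$ and Step 1 of the algorithm was passed, Claim~\ref{cycle} shows $H$ has no induced cycle of length $\ge 4$, so $H$ is chordal; as $\{13,23\}$ is obtained from $\{12,13\}$ by reversing the vertex order, $\ORD(\{\{13,23\}\})$ is precisely the class of chordal graphs, so $H$ has an $\F$-free linear ordering $<$. Put $T=\{(x,y)\in V(H^+):x<y\}$. Every arc $(x,y)\to(z,y)$ of $H^+$ arises from a triple which, in the formal order $x<y<z$, induces $\{13,23\}$, and $\F$-freeness of $<$ forbids the configuration $x<y<z$; hence $(x,y)\in T$ forces $(z,y)\in T$, and likewise the twin arc $(y,z)\to(y,x)$ propagates membership in $T$. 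Thus $T$ is closed under the arcs of $H^+$.

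Call a vertex $(a,b)$ of $H^+$ an \emph{edge pair} if $ab\in E(H)$ and a \emph{non-edge pair} otherwise. Since $\F=\{\{13,23\}\}$, inspection of the two arc shapes shows that every arc of $H^+$ joins an edge pair to a non-edge pair; consequently every directed walk of $H^+$ alternates between the two kinds of pair, and every directed cycle has even length. Now, since $S$ is non-trivial, $H^+$ has a directed cycle. The two arcs $(x,y)\to(z,y)$ and $(y,z)\to(y,x)$ generated by one forbidden triple are reverses of each other, so $p\to p'$ is an arc if and only if $\overline{p'}\to\overline{p}$ is; hence the reverses of a directed cycle again form a directed cycle. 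For each unordered pair $\{u,v\}$ exactly one of $(u,v),(v,u)$ lies in $T$, so one of these two mutually reverse directed cycles meets $T$, and then by closure it lies entirely in $T$; fix such a directed cycle $\Gamma:q_0\to q_1\to\cdots\to q_{2m}=q_0$.

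Re-index so that, for each $i=1,\dots,m$ (indices of the $q$'s mod $2m$), the arc $q_{2i-1}\to q_{2i}$ runs from an edge pair to a non-edge pair and $q_{2i}\to q_{2i+1}$ runs the other way. Then $q_{2i-1}\to q_{2i}$ equals $(\beta_i,\gamma_i)\to(\beta_i,\alpha_i)$ for a triple with formal order $\alpha_i<\beta_i<\gamma_i$, $\alpha_i\gamma_i,\beta_i\gamma_i\in E(H)$, and $\alpha_i\beta_i\notin E(H)$. Both endpoints lie in $T$, so $\beta_i<\gamma_i$ and $\beta_i<\alpha_i$ in the ordering $<$; since $<$ is $\F$-free and $\gamma_i$ is a common neighbour of the non-adjacent pair $\alpha_i,\beta_i$, the vertex $\gamma_i$ cannot be $<$-larger than both of them, and as it already exceeds $\beta_i$ we get $\gamma_i<\alpha_i$. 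The arc $q_{2i}=(\beta_i,\alpha_i)\to q_{2i+1}$ keeps the second coordinate, so $q_{2i+1}$ has second coordinate $\alpha_i$; but $q_{2i+1}$ is the tail $(\beta_{i+1},\gamma_{i+1})$ of the next arc of the first kind, whence $\gamma_{i+1}=\alpha_i$. Therefore $\gamma_i<\gamma_{i+1}$ for all $i$ modulo $m$, i.e.\ $\gamma_1<\gamma_2<\cdots<\gamma_m<\gamma_1$, which is impossible since $<$ is a linear order. Hence $\F\ne\{\{13,23\}\}$.

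The delicate part, and the step I expect to be the main obstacle, is the bookkeeping in the last two paragraphs: verifying that the twin arcs of a triple are reverses of one another, that the two kinds of arc strictly alternate along $\Gamma$ (so that the triples $(\alpha_i,\beta_i,\gamma_i)$ can be labelled consistently around it), and --- above all --- that consecutive triples overlap in the vertex $\gamma_{i+1}=\alpha_i$, which is exactly what closes the increasing cyclic chain. By contrast, the existence of the $\F$-free ordering and the closure of $T$ under arcs are immediate from the definitions and the standard description of chordal graphs.
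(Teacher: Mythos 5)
Your proof is correct, but it takes a genuinely different route from the paper's. The paper stays entirely inside the algorithmic setting: it works with the minimal circuit $(x_0,x_1),\dots,(x_n,x_0)$ assumed to be created, uses the priority rules of Step~3 to extract, for each non-edge $x_ix_{i+1}$, a common neighbour $p_i$ with $(x_i,p_i)\in S\cup D$, and then, through adjacency exclusions driven by minimality of the circuit, exhibits an induced cycle of length at least four on the $x_i$'s and $p_i$'s; Claim~\ref{cycle} then gives a circuit in a strong component of $H^+$, contradicting the fact that Step~1 was passed. You ignore the circuit $C$ and the priority rules altogether and prove the stronger structural statement that, when $\F=\{\{13,23\}\}$ and no strong component of $H^+$ contains a circuit, $H^+$ has no directed cycle at all, so every strong component is trivial --- which contradicts only the Case~C hypothesis that $S$ is non-trivial (you do not even need $x_0x_n\notin E(H)$). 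Your mechanism is sound: chordality follows from Claim~\ref{cycle}, the classical perfect-elimination-ordering theorem (cited as well known in the paper, so no circularity with Theorem~\ref{main}) supplies the $\F$-free order $<$, the set $T=\{(x,y):x<y\}$ is closed under arcs, the conjugation symmetry ($p\to p'$ iff $\overline{p'}\to\overline{p}$) lets you push a directed cycle into $T$, and the edge/non-edge alternation together with the fact that one arc type preserves the second coordinate and the other the first indeed forces $\gamma_{i+1}=\alpha_i$, closing the impossible chain $\gamma_1<\gamma_2<\cdots<\gamma_m<\gamma_1$; I verified this bookkeeping, including the degenerate case $m=1$. The trade-off is clear: your argument imports an external characterization of chordal graphs but is shorter, avoids all minimality and priority analysis, and yields a cleaner, stronger conclusion (acyclicity of $H^+$ for this $\F$ once Step~1 passes), whereas the paper's longer argument is self-contained within the constraint-digraph machinery it is developing.
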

\begin{proof} For contradiction suppose $\{13,23\}$ is the only forbidden
pattern. Note that by assumption there exists at least one $0 \le
r \le n$ such that $x_rx_{r+1}$ is not an edge of $H$ (in
particular $r=n$, the indexes are modul $n+1$). Now according to
the rules of the algorithm the assumption that $\{13,23\}$ is the
only forbidden pattern, if $x_ix_{i+1} \not\in E(H)$, $0 \le i \le
n$ there must be a vertex $p_i$ such that $x_ip_i,x_ip_{i+1} \in
E(H)$ and $(x_i,p_i) \in S \cup D$ dominates $(x_i,x_{i+1})$.
Moreover $(x_i,x_{i+1}) \rightarrow (p_i,x_{i+1})$ and hence
$(p_i,x_{i+1}) \in S \cup D$. We may assume circuit $C$ has the
minimum number of pairs $(x_i,x_{i+1})$ that $x_ix_{i+1}$ is not
an edge.

{\bf Observation :} If $x_jx_{j+1}, x_{j+1}x_{j+2}$ are edges of
$H$ then $x_jx_{j+2}$ is also an edge of $H$ as otherwise
$(x_j,x_{j+1}) \rightarrow (x_j,x_{j+2})$ and hence $(x_j,x_{j+2})
\in S \cup D$, implying a shorter circuit. If $x_jx_{j+1}$ is not
an edge of $H$ then for every $j' \ne j,j+1$ at most one of the
$x_jx_{j'},x_{j+1}x_{j'}$ is an edge of $H$ as otherwise
$(x_j,x_{j+1}) \rightarrow (x_{j'},x_{j+1})$ and hence we have a
shorter circuit in $S \cup D$.

\noindent\textit{ First, suppose $x_{r+1}x_{r+2}$ is an edge of
$H$}. By observation above $x_rx_{r+2} \not\in E(H)$ and hence
$p_rx_{r+2} \not\in E(H)$ as otherwise $(x_r,p_r) \rightarrow
(x_r,x_{r+2})$ and hence $(x_r,x_{r+2}) \in S \cup D$, a shorter
circuit in $S \cup D$. If $x_{r+2}x_{r+3}$ is also an edge of $H$
then by Observation above $x_{r+1}x_{r+3} \in E(H)$ and hence
$x_rx_{r+3} \not\in E(H)$. Now $p_rx_{r+3} \not\in E(H)$ as
otherwise $(x_r,p_r) \rightarrow (x_r,x_{i+r})$, implying a
shorter circuit in $S \cup D$.

\noindent\textit{ Second, let $j$ be the first index after $r$ (in
the clockwise direction) such that $x_jx_{j+1}$ is not an edge of
$H$.} By the above Observation we may assume that $x_{r+1}x_j$,
$r+1 \ne j$ is an edge of $H$ and none of
$x_rx_j,x_rx_{j+1},x_{r+1}x_{j+1}$ is an edge of $H$. Now
$p_{j}x_r \not\in E(H)$ as otherwise $(x_{j},p_j) \rightarrow
(x_j,x_r)$ and hence $(x_j,x_r) \in S \cup D$, implying a shorter
circuit in $S \cup D$. Now $x_{j+1}x_r \not\in E(H)$ as otherwise
$(p_{j},x_{j+1}) \rightarrow (p_{j},x_r) \rightarrow
(x_{j+1},x_r)$ and hence $(x_{j+1},x_r) \in S \cup D$, a shorter
circuit in $S \cup D$. Now $p_rx_{j+1} \not\in E(H)$ as otherwise
$(x_r,p_r) \rightarrow (x_r,x_{j+1})$ and hence we obtain a
shorter circuit
$(x_0,x_1),(x_1,x_2),\dots,(x_{r-1},x_r),(x_r,x_j),\dots,
(x_n,x_0) \in S \cup D$. Similarly $p_rx_j \not\in E(H)$. as
otherwise we obtain a shorter circuit in $S \cup D$. Moreover
$p_rp_{j} \not\in E(H)$ as otherwise $(x_r,p_r) \rightarrow
(x_r,p_j)$ and by replacing $x_j$ with $p_{r+1}$ we obtain circuit
$C_1=(x_0,x_1),\dots,(x_{r-1},x_r),(x_r,p_j),(p_{j},x_{j+1}),(x_{j+1},x_{r+2}),$
\\ $\dots,(x_n,x_0)$, and since $p_{r+1}x_{r+2}$ is an edge of $H$,
$C_1$ contradicts the our assumption about $C$ (if $j \ne r+1$
then $(p_r,x_{r+1}) \rightarrow (p_r,x_j) \rightarrow
(x_{r+1},x_j)$ and hence $(x_{r+1},x_j) \in S \cup D$, a shorter
circuit in $S \cup D$).

By applying two above arguments we conclude that $p_r$ is not
adjacent to any $x_j$, $j \ne r,r+1$ and $p_r$ is not adjacent to
any $p_j$. Therefore we obtain an induced cycle
$x_r,p_r,x_{r+1},x_j,p_j,x_{j+1},\dots,x_r$ of length more than
$3$, and by Claim \ref{cycle} we conclude that there exists a
circuit in a strong component of $H^+$. \qed \end{proof}

\begin{claim}\label{independent}
If one of the patterns $\{12\},\{23\},\{13\}$ is in $\F$, then
$x_0,x_1,\dots,x_n$ induce an independent set.
\end{claim}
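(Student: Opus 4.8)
The plan is to argue by contradiction: assuming that $H$ has some edge joining two of the vertices $x_0,\dots,x_n$, I will exhibit either an invertible pair (a circuit of length two in a strong component of $H^+$), which contradicts the standing hypothesis that no strong component of $H^+$ contains a circuit, or a circuit inside $D\cup S$ that is strictly shorter than $C$, contradicting the minimality of $n$.

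The argument splits into two stages, mirroring the proof of Claim~\ref{clique}. In the first stage I would rule out \emph{consecutive} edges, that is, show $x_ix_{i+1}\notin E(H)$ for every $i$; here we already know $x_nx_0\notin E(H)$ from the hypothesis of this sub-case. If $x_rx_{r+1}\in E(H)$ for some $0\le r\le n-1$, I would use the singleton pattern that lies in $\F$ (one of $\{12\},\{23\},\{13\}$ --- there is one since $S$ is non-trivial and hence $\F\ne\{\{13,23\}\}$, cf. Claim~\ref{only-{13,23}}), together with the structural restrictions imposed by the priority rules of Step~3(a) (components with an edge-pair, and trivial components, are chosen first) and, where needed, the arcs coming from $\{13,23\}$, to route a dominating arc back onto the circuit $C$ at $x_r$; depending on whether the singleton is $\{12\}$, $\{23\}$ or $\{13\}$ the precise arc used differs, but in each case one reaches $(x_{r+1},x_r)$ in $D\cup S$ and hence the two-element circuit $(x_r,x_{r+1}),(x_{r+1},x_r)$, or else a strictly shorter circuit. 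In the second stage, with consecutive edges excluded, I would propagate: if $x_ax_b\in E(H)$ with the cyclic gap from $a$ to $b$ minimal (necessarily more than one), then the triple $\{x_a,x_{a+1},x_b\}$ realizes the appropriate pattern from $\F$ (its non-edges being guaranteed by minimality of the gap), which pushes $(x_b,x_{a+1})$ or $(x_{a+1},x_b)$ into $D\cup S$ and, together with the sub-path of $C$ from index $a+1$ to $b$, produces a shorter circuit --- exactly as in Claim~\ref{clique}, with edges and non-edges interchanged. This completes the contradiction and forces $x_0,\dots,x_n$ to be independent.

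The step I expect to be the main obstacle is the first stage, and within it the bookkeeping. One must check, for each local configuration and each of the three singleton sub-cases, that (i) the non-edges required to invoke a given pattern are actually present --- for this I would lean on the priority rules of Step~3(a) and on Claim~\ref{sink-component}, as in Claim~\ref{clique}; (ii) the pair produced by the chain of dominating arcs genuinely lands in $D\cup S$, using that adding a component to $D$ drags in its whole outsection while its insection is discarded; and (iii) the resulting circuit is \emph{strictly} shorter than $C$, so the new chord pair must not coincide with a pair already on $C$ in a way that fails to shorten. Once these checks are carried out in all cases, no edge can appear among $x_0,\dots,x_n$, which is the assertion of the claim.
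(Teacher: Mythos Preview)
Your overall engine---derive a shorter circuit in $D\cup S$ from a putative edge among the $x_i$---is exactly right, but your two-stage organisation is heavier than necessary, and you reach for tools (the priority rules of Step~3(a), Claim~\ref{sink-component}) that play no role in this claim.

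The paper's proof is a single backward propagation and does not separate consecutive from non-consecutive edges. Assume \emph{any} edge $x_ix_j$. If $x_{i-1}x_i\notin E(H)$, look at the triple $x_{i-1},x_i,x_j$: either $x_{i-1}x_j\in E(H)$, in which case $\{13,23\}$ applied to $x_{i-1}<x_i<x_j$ gives $(x_{i-1},x_i)\rightarrow(x_j,x_i)$; or $x_{i-1}x_j\notin E(H)$, and then whichever of $\{12\},\{23\},\{13\}$ lies in $\F$ forces an arc from $(x_{i-1},x_i)$ to a pair that shortcuts the circuit. In every case one obtains a strictly shorter circuit in $D\cup S$, so $x_{i-1}x_i$ must be an edge. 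Iterating around the cycle, every consecutive pair is an edge, and in particular $x_nx_0\in E(H)$, contradicting the standing hypothesis of this sub-case.

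By contrast, your Stage~1 (rule out consecutive edges first) is actually the hard direction: to exclude a single $x_rx_{r+1}\in E(H)$ you still need a third vertex and the same case analysis on $x_{r-1}$ that the paper performs---and once you carry that out you have already done the whole proof, making Stage~2 superfluous. Your Stage~2 is fine in spirit, though note that for the pattern $\{23\}$ the natural third vertex is $x_{b-1}$ rather than $x_{a+1}$. Finally, the priority rules of Step~3(a) and Claim~\ref{sink-component} govern \emph{which} component becomes $S$; this claim is purely about patterns and the circuit, so invoking them here is a detour.
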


\begin{proof} For a contradiction suppose $x_ix_j \in E(H)$. Now $x_{i-1}x_i
\in E(H)$ as otherwise $(x_{i-1},x_i) \rightarrow (x_j,x_i)$ when
$x_jx_{i-1} \in E(H)$ according to $\{13,23\}$ and when
$x_{i-1}x_j \not\in E(H)$ then $(x_{i-1},x_i) \rightarrow
(x_j,x_i)$ when  $\{23\} \in \F$ or $(x_{i-1},x_i) \rightarrow
(x_{i-1},x_j)$ when $\{23\} \in \F$) or $(x_{i-1},x_i) \rightarrow
(x_{i-1},x_j)$ when $\{23\} \in \F$). In any case we have a
shorter circuit. Therefore $x_{i-1}x_i \in E(H)$ and by continuing
this argument we conclude that $x_0x_n \in E(H)$, contradicting
our assumption. \qed \end{proof}

Since $x_0x_n$ is not an edge up to symmetry and using the same
argument in the complement of $H$ in the remaining we just
consider the cases  $\{13\} \in \F$ and $\{12,23\} \in \F$.

\noindent\textit{We first consider $\{13\} \in \F$.} Suppose
$(x_n,x_0)$ is dominated by some pair $(x,y) \in S$ according to
forbidden pattern $\{13,23\}$. This means that there exists some
$q_n$ such that $(x_n,q_n) \in S$ dominates $(x_n,x_0)$,
$q_nx_0,q_nx_n \in E(H)$. Now $x_{n-1}q_n \in E(H)$ as otherwise
according to forbidden pattern $\{13\}$, $(x_{n-1},x_n)
\rightarrow (x_{n-1},q_n)$ and $(x_{n-1},q_n) \rightarrow
(x_{n-1},x_0)$ a shorter circuit in $D \cup S$. However according
to $\{13,23\}$, $(x_n,q_n) \rightarrow (x_n,x_{n-1})$, a
contradiction. Therefore $(x_n,x_0)$ is dominated by a pair $(x,y)
\in S$ according to pattern $\{13\}$. This means there exists
$q_0$ such that $q_0x_0 \in E(H)$, $q_0x_n \not\in E(H)$ and
$(x_n,q_0) \in S$ dominates $(x_n,x_0)$. Consider an edge $x_iq_i$
for some $0 \le i \le n$. By applying similar argument in the
beginning of the case, $q_ix_{i+1} \not\in E(H)$. We show that
$q_i$ is not adjacent to any other $x_j$, $j \ne i,i+1$. Otherwise

$(x_i,x_{i+1}) \rightarrow (q_i,x_{i+1})$ and  $(q_i,x_{i+1})
\rightarrow (x_j,x_{i+1})$ and hence we obtain a shorter circuit.
Moreover we show that $x_i$ can be replaced by $x_i$ and obtain a
circuit of length $n$ as follows: $(x_{i-1},x_i) \rightarrow
(x_{i-1},q_i)$ and $(x_i,x_{i+1}) \rightarrow (q_i,x_{i+1})$ and
hence
$(x_0,x_1),(x_1,x_2),\dots,(x_{i-2},x_{i-1}),(x_{i-1},q_i),(q_i,x_{i+1}),(x_{i+1},x_{i+2}),\dots,(x_{n-1},x_n),$
\\ $(x_n,x_0)$ is also a circuit in $D \cup S$.

Thus $(q_0,x_1),(x_1,x_2),\dots,(x_{n-1},x_n),(x_n,q_0)$ is also a
circuit in $D \cup S$. Moreover if $q_0$ has a neighbor then this
neighbor is not adjacent to any of the $x_j$, $j \ne 0$. This give
rise to a disjoint connected components $H_0,H_1,\dots,H_n$ where
there is no edge between $H_i,H_j$, $i \ne j$. Moreover
$(x_i,x_j)$ and $(x_r,x_s)$, for $(i,j) \ne (r,s)$ are in
different components of $H^+$. These imply that $(x_0,x_n)$ is
also green and by adding $(x_0,x_n)$ into $D$ we don't form a
circuit as otherwise there would be an earlier circuit in $D$.

\noindent\textit{ Second we consider $\{12,23\} \in \F$.} Since
$x_nx_0$ is not an edge, and $(x_n,x_0)$ is in a non-trivial
component, $(x_n,x_0)$ is dominated by some pair $(x,y) \in S$
according to $\{13,23\}$. This means that there exists some $q_n$
such that $(x_n,q_n) \in S$ dominates $(x_n,x_0)$, $q_nx_0,q_nx_n
\in E(H)$. Note that $(x_n,q_n) \rightarrow (x_n,x_0) \rightarrow
(q_n,x_0)$. However $(x_n,q_n) \rightarrow (x_0,q_n)$ according to
$\{12,23\}$, a contradiction.

{\bf Case 2.} $\{12,23\} \in \F$, and $\{13,23\}, \{12,13\}
\not\in \F$.

\noindent\textit{First, suppose $x_nx_0$ is an edge of $H$.}

We show that $x_{n-1}x_n$ is an edge of $H$. Otherwise according
to the Algorithm and the assumption that none of the $\{13,23\},
\{12,13\}$ is in $\F$ then one of the patterns $\{12\}$, $\{23\}$,
$\{13\}$ is in $\F$. Now $x_{n-1}x_0$ is not an edge as otherwise
$(x_n,x_0) \rightarrow (x_{n-1},x_0)$ and hence we have a shorter
circuit.  However $(x_{n-1},x_n) \rightarrow (x_0,x_n)$ when
$\{23\} \in \F$ and $(x_n,x_0) \rightarrow (x_{n-1},x_0)$ when
$\{12\} \in \F$ and $(x_{n-1},x_n) \rightarrow (x_{n-1},x_0)$ when
$\{13\} \in \F$. In any case we obtain a shorter circuit. Thus
$x_{n-1}x_n$ is an edge and by applying the previous argument we
conclude that $x_ix_{i+1}$, $0 \le i \le n$ is an edge and hence
$x_0,x_1,\dots,x_n$ induce a clique.

Note that $(x_0,x_n)$ does not dominate any pair according to one
of the patterns $\{12\}$, $\{23\}$, $\{13\}$. For contradiction
suppose $(x_n,x_0)$ is dominated by some  pair $(x,y) \in S$
according to $\{23\}$ (the argument of other cases is analogous).
This means there exists $w$ such that $(w,x_0) \rightarrow
(x_n,x_0) \rightarrow (x_n,w)$. Now if $wx_j \not\in E(H)$ then
$(w,x_0) \rightarrow (x_j,x_0)$ and we obtain a shorter circuit.
Now $(x_{n-1},x_n) \rightarrow (x_{n-1},w) \rightarrow
(x_{n-1},x_0)$ implying a shorter circuit in $D \cup S$. Therefore
we may assume that $\{12,23\}$ is the only forbidden pattern in
$\F$.

Since $(x_n,x_0)$ dominates a pair in $S$ because of symmetry we
continue by assuming there exists $q_n$ such that $x_nq_n \in
E(H)$ and $x_0q_n \not\in E(H)$. In this case $(x_0,x_n)(q_n,x_n)$
is an edge of $H^+$ (symmetric). Now $(x_{n-1},x_n) \rightarrow
(q_n,x_n)$ and hence $(q_n,x_n) \in D \cup S$. On the other hand
we have $(x_n,q_n) \in S$, implying a circuit in $D \cup S$, a
contradiction.

\noindent\textit{Second, suppose $x_nx_0$ is not an edge of $H$.}

According to the algorithm we may assume one of the
$\{12\},\{23\},\{13\}$ is in $\F$. Now by the same argument as in
Claim \ref{independent}, $x_0,x_1,\dots,x_n$ induce an independent
set. By using symmetry and applying argument in Case B we conclude
$(x_0,x_n)$ can be added into $D$ without creating a circuit.

This completes the correctness proof of the algorithm and hence
the proof of Theorem \ref{main}. \qed \end{proof}

We now focus on Theorem \ref{main-bipartite} and the corresponding
algorithm.

\begin{proof} The argument that shows that the algorithm does not create a
circuit is very similar to the proof of Lemma
\ref{correctness-for-3}. However for the sake of completeness we
present the case when $\F =\{ \{12',21'\},\{12',21',22'\}\}$.  We
claim that the algorithm will never create a circuit, and hence
yield the desire ordering. We also prove an stronger version by
assuming that $H$ does not have an invertible pair. Then the
components of $H^+$ come in conjugate pairs $S, \overline{S}$. We
claim that the Algorithm for $\BORD_4$ does not creates a circuit.

Otherwise, suppose the addition of $S$ creates circuits for the
first time, and $(x_0,x_1), (x_1,x_2),$ \\ $ \dots, (x_n,x_0)$ is
a shortest circuit created at that time. Note that $n>1$,
according to the rules of the algorithm. Without loss of
generality, assume that the pair $(x_n,x_0)$ (and possibly other
pairs) lies in $S$.

We may assume that $(x_n,x_0)$ is in a non-trivial component $S$
of $H^+$, otherwise $(x_n,x_0)$ or $(x_0,x_n)$ is a sink, and
clearly no circuit is created during the preliminary stage when
sinks were handled. Thus $(x_n,x_0)$ dominates and is dominated by
some pair, which may be assumed to be the same pair, say
$(y_n,y_0)$.  We claim that each $x_i$ has a private neighbor
$y_i$ but not to any other $y_j$ with $j \neq i$.

Now $y_nx_j \not\in E(H)$ for $j \ne n, 0 $, as otherwise
$(y_n,y_0)$ dominates $(x_j,x_{0})$ and hence $(x_j,x_{0}) \in D
\cup S$ implying a shorter circuit in $D \cup S$. Moreover
$y_0x_j$, $j \ne n,0$ as otherwise $(y_n,y_0)$ dominates
$(x_{n},x_j)$ and hence $(x_{n},x_j) \in D \cup S$, again implying
a shorter circuit in $D \cup S$. For summary we have :

\begin{itemize}
\item $x_{0}y_0, x_ny_0 \in E(H)$ \item $x_jy_n \not\in E(H)$, $j
\ne  n$ \item $x_jy_0 \not\in E(H)$, $j \ne 0$.
\end{itemize}

If $(x_{0},x_{1})$ does not dominates any pair in $H^+$ then we
show that $(x_n,x_{1})$ does not dominates any pair in $H^+$
either, and hence by the rules of the algorithm $(x_n,x_{1})$ is
in $D$ already, contradicting the minimality of $n$. In contrary
suppose $(x_0,x_1)$ does not dominate any pair and $(x_n,x_{1})$
dominates some pair $(y_n,y_1)$ in $H^+$. Note that $x_1y_n$ is
not an edge. Now $x_{0}y_1$ is an edge and hence $(x_n,x_{0})$
dominate $(y_n,y_1) \in D \cup S$. Moreover $(y_n,y_1)$ dominates
$(x_n,x_{1})$ and hence $(x_n,x_1) \in D \cup S$ implying a
shorter circuit in $D \cup S$.

Therefore $(x_{0},x_{1})$ dominates $(y_0,y_1) \in D \cup S$.
Recall that none of the  $x_ny_0,x_0y_n,x_1y_0,x_1y_n$ is an edge
of $H$. Now $x_ny_1 \not\in E(H)$ as otherwise $(x_n,x_{0})$
dominates $(y_1,y_0)$ and hence $(y_1,y_0) \in D \cup S$. Now
$(y_1,y_0)$ dominates $(x_{1},x_0)$ implying that $(x_{1},x_0) \in
D \cup S$ and hence yielding a shorter circuit in $D \cup S$. We
conclude that $(x_0,x_{1})$ and $(x_n,x_{1})$ both are in non
trivial components of $H^+$, and $x_ny_n,x_{0}y_0,x_{1}y_1$ are
independent edges. By continuing this argument we conclude that
there are independent edges $x_0y_0,x_1y_1,...,x_ny_n$.

Let $X$ denotes the set of vertices of $H$ adjacent to all $x_i$,
or to all $y_i$. Note that $X$ is complete bipartite, as otherwise
$H$ contains a six-cycle (with diametrically opposite vertices in
$X$), and hence an invertible pair. Now we claim that any vertex
$v$ not in $X$ is adjacent to at most one $x_i$ (or $y_i$).
Otherwise we may assume $v$ is not adjacent to $x_{i-1}$ but is
adjacent to $x_i$ and $x_j$, which would imply that $(x_{i-1},
x_j)$ is in the same component as $(x_{i-1}, x_i)$, contradicting
the minimality of our circuit. (Thus it is impossible to have a
path of length two between $x_i$ and $x_j (i \neq j)$ without
going through $X$.) More generally, we can apply the same argument
to conclude that any path between $x_i$ and $x_j (i \neq j)$ must
contain a vertex of $X$. Therefore, $H - X$ has distinct
components $H_1, H_2, \dots H_n$, where $H_i$ contains $x_i$ and
$y_i$. We claim that the component of $H^+$ containing the pair
$(x_i,x_{i+1})$ consists of all pairs $(u,v)$ where $u$ is in
$H_i$ and $v$ is in $H_{i+1}$. This implies that $S$ does not
contain any $(x_i,x_{i+1})$ other that $(x_n,x_0)$; it also
implies that both $S$ and $\overline{S}$ are green.

Now consider the addition of $\overline{S}$. If it also leads to a
circuit $(z_0,z_1), \dots, (z_r,z_0)$, we may assume that
$z_0=x_n$ and $z_r=x_0$. By a similar argument we see that
$\overline{S}$ does not contain any other $(z_i,z_{i+1})$. This
means that $(x_0,x_1), \dots, (x_{n-1},x_n)=(x_{n-1},z_0),
(z_0,z_1), \dots (z_r,z_0)=(z_r, x_0)$ was an earlier circuit,
contradicting the assumption.

This completes the proof of Theorem \ref{main-bipartite} \qed
\end{proof}

\subsection{Obstruction Characterizations}

It is similarly the case that the constraint digraph offers a
unifying concept of an obstruction for graph classes $\BORD(\F),
\F \subseteq \B_4$. Namely, Theorem \ref{main-bipartite} characterizes
all these classes by the absence of a circuit in a strong
component of the constraint digraph. In some cases we can again
simplify the obstructions to a bipartite version of invertible
pairs.

An {\em invertible pair} of $H$ is a pair of vertices $u,v$ from
the same part of the bipartition such that both $(u,v)$ and
$(v,u)$ lie on the same directed cycle of $H^+$. Thus a circuit of
length two in a strong component of $H^+$ corresponds precisely to
an invertible pair.

For our first illustration we discuss the case of co-circular-arc
bigraphs. A {\em co-circular-arc bigraph} is a bipartite graph
whose complement is a circular arc graph. A complex
characterization of co-circular-arc graphs by seven infinite
families of forbidden induced subgraphs has been given in
\cite{trotter}, later simplified to a Lekerkerker-Boland-like
characterization by forbidden induced cycles and edge asteroids in
\cite{co-circular-pavol}. These graphs seem to be the bipartite
analogues of interval graphs, see \cite{co-circular-pavol}. One
reason may be that co-circular-arc bigraphs are precisely the
intersection graphs of $2$-directional rays \cite{japan}.

\vspace{2mm}

We observe the following simple characterization.

\begin{theorem}
Let $H=(B,W)$ be a bipartite graph. Then the following are
equivalent.
\begin{itemize}
\item [(1)] $H$ is a co-circular-arc bigraph.

\item [(2)] $H$ admits an $\F$-free ordering where $\F=
\{\{12',21'\},\{12',21',22'\}\}$.

\item[(3)] $H$ has no invertible pair.
\end{itemize}
\end{theorem}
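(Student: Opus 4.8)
The plan is to establish the three-way equivalence by a cycle of implications, leaning heavily on the general machinery already developed, namely Theorem~\ref{main-bipartite} and the special correctness argument worked out at the end of Section~\ref{bipartite} for exactly this family $\F=\{\{12',21'\},\{12',21',22'\}\}$. The implication $(2)\Leftrightarrow(3)$ is essentially a corollary of that worked example: the proof of Theorem~\ref{main-bipartite} specialized to this $\F$ shows that the algorithm never creates a circuit as long as $H$ has no invertible pair, so $(3)\Rightarrow(2)$; conversely, if $H$ has an invertible pair then $(u,v)$ and $(v,u)$ lie in a common strong component of $H^+$, giving a circuit of length two in a strong component, so by Theorem~\ref{main-bipartite} no $\F$-free ordering exists, i.e. $\neg(3)\Rightarrow\neg(2)$. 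So the real content is connecting these order-theoretic conditions to the geometric/structural notion in $(1)$.

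For $(1)\Rightarrow(2)$ I would argue as follows. Suppose $H=(B,W)$ is a co-circular-arc bigraph, so $\overline H$ (the bipartite complement, keeping the bipartition $B\cup W$) is a circular-arc graph; equivalently, by the known characterization (see \cite{hell-huang}), $\overline H$ has a round-enumeration-type ordering, or more directly $H$ itself admits orderings of $B$ and of $W$ with the ``monotone'' adjacency structure characteristic of two-directional ray intersection graphs \cite{japan}. I would spell out which concrete ordering to take: order $B$ and $W$ so that for the rays/arcs the left endpoints increase; then check directly that none of the two forbidden patterns $\{12',21'\}$ and $\{12',21',22'\}$ can occur. Here $\{12',21'\}$ is the pattern on $x<y$ in one part and $u'<v'$ in the other with edges $xv'$ and $yu'$ only (a ``crossing'' non-edge configuration), and $\{12',21',22'\}$ adds the edge $yv'$. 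The point is that a two-directional ray representation forbids exactly these crossing patterns once the parts are ordered by ray origin. This is the step I expect to be the main obstacle, since it requires either quoting the precise known structural characterization of co-circular-arc bigraphs in a form that exhibits the ordering, or reproving that characterization; I would prefer to cite \cite{hell-huang,co-circular-pavol,japan} and extract the ordering, verifying the pattern-freeness by a short case check on four vertices.

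For the reverse direction $(2)\Rightarrow(1)$, I would take an $\F$-free ordering $<_B$ of $B$ and $<_W$ of $W$ and construct a two-directional ray (equivalently, circular-arc complement) representation from it. The absence of $\{12',21'\}$ and $\{12',21',22'\}$ means: whenever $x<_B y$ and $u'<_W v'$, we cannot have ($xv'\in E$, $yu'\in E$) regardless of the status of $yv'$ — i.e. the neighborhoods, read in the two orders, are ``staircase-monotone'' in the sense required for a two-directional ray model. From such a staircase structure one builds, for each vertex of $B$, a ray pointing in one direction and for each vertex of $W$ a ray pointing in the orthogonal direction, with coordinates read off from the positions in $<_B$ and $<_W$, so that two rays intersect iff the corresponding vertices are non-adjacent in $H$; this is precisely a representation of $\overline H$ as a two-directional ray graph, hence $\overline H$ is a circular-arc graph and $H$ is a co-circular-arc bigraph. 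Alternatively, and more economically, I would observe that $(2)\Rightarrow(3)$ is already done (contrapositive of $\neg(3)\Rightarrow\neg(2)$ above), and then invoke the known theorem that a bipartite graph with no invertible pair in this particular $H^+$ is exactly a co-circular-arc bigraph — but since that known theorem is essentially what we are proving, the self-contained route through the staircase/ray construction is the honest one. I would organize the final writeup as $(1)\Rightarrow(2)\Rightarrow(3)\Rightarrow(1)$, with $(2)\Rightarrow(3)$ immediate from Theorem~\ref{main-bipartite}, $(3)\Rightarrow(2)$ from the specialized correctness proof (so together $(2)\Leftrightarrow(3)$), and the two implications involving $(1)$ handled by the ordering-to-ray translation sketched above.
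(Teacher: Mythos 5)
Your treatment of $(2)\Leftrightarrow(3)$ is exactly the paper's: the correctness proof of Theorem~\ref{main-bipartite} is worked out in the paper precisely for $\F=\{\{12',21'\},\{12',21',22'\}\}$ under the hypothesis that $H$ has no invertible pair, which gives $(3)\Rightarrow(2)$, and an invertible pair is a circuit of length two in a strong component of $H^+$, which gives $(2)\Rightarrow(3)$. That part is fine and is the same argument.

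The gap is in $(1)\Leftrightarrow(2)$. The paper does not reprove this equivalence; it cites \cite{arash-esa}, where $\F$-free orderings for this $\F$ are shown to coincide with min orderings and the equivalence with co-circular-arc bigraphs is established. Your sketch of a direct proof has two concrete problems. First, a definitional slip: a co-circular-arc bigraph is a bipartite graph whose \emph{full} complement (in which $B$ and $W$ become cliques) is a circular-arc graph, not the bipartite complement as you state; the distinction matters, because your $(2)\Rightarrow(1)$ construction, which makes rays intersect exactly when vertices are \emph{non-adjacent} in $H$, would represent the bipartite complement of $H$ as a $2$-directional ray graph, and from that one cannot conclude that the complement of $H$ is a circular-arc graph. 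The correct use of the ray characterization quoted in the paper is that $H$ \emph{itself} is the intersection graph of $2$-directional rays, so intersections must correspond to edges. Second, even with the orientation fixed, the passage from a monotone (``staircase'') ordering to a ray model, and from a ray model to a circular-arc representation of the complement, is exactly the nontrivial content of the cited results \cite{japan,arash-esa,hell-huang}; your plan either silently re-derives them (and the derivation is not actually given — the four-vertex ``short case check'' is asserted, not performed) or reduces to quoting them, which is what the paper does. As written, the implication $(2)\Rightarrow(1)$ is therefore not established by your argument; to repair it, either cite the min-ordering characterization as the paper does, or carry out the construction with intersections encoding adjacency and then prove (not just assert) that such a ray model yields a circular-arc representation of the complement of $H$.
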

\begin{proof} It was shown in \cite{arash-esa} that (1) and (2) are
equivalent. (In \cite{arash-esa} $\F$-free orderings are described
by an equivalent notion of  so-called min orderings.) According to
proof of Theorem \ref{main-bipartite} for $\F$ we assume that $H$
does not have an invertible pair. Therefore (2) and (3) are
equivalent and hence the theorem is proved. \qed \end{proof}

A bipartite graph $G=(V,U)$ is called {\em proper interval
bigraph} if the vertices in each part can be represented by an
inclusion-free family of intervals, and a vertex from $V$ is
adjacent to a vertex from $U$ if and only if their intervals
intersect. They are also known as bipartite permutation graphs
\cite{gregory,spinrad,spinrad1}.

\begin{theorem}
Let $H=(B,W)$ be a bipartite graph. Then the following are
equivalent.
\begin{itemize}
\item[(1)] $H$ is a proper interval bigraph

\item[(2)] $H$ admits an $\F$-free ordering where $\F=
\{\{12',21'\},\{12',21',22'\},\{11',12',21'\}\}$.

\item[(3)] $H$ does not have an invertible pair.
\end{itemize}
\end{theorem}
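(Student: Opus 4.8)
The plan is to treat the two equivalences separately. The equivalence $(1)\Leftrightarrow(2)$ is a known structural characterization that I would simply quote: proper interval bigraphs are exactly the bipartite permutation graphs, and it is classical \cite{gregory,spinrad,spinrad1} that a bipartite graph is of this type precisely when its two parts admit orderings under which the biadjacency matrix contains neither a ``cross'' nor a one-sided staircase violation in either direction, and these three forbidden configurations are exactly the patterns $\{12',21'\}$, $\{12',21',22'\}$ and $\{11',12',21'\}$ (one may alternatively invoke the min-ordering reformulation used in the previous theorem). The direction $(2)\Rightarrow(3)$ is immediate: if $H$ has an invertible pair $u,v$, then $(u,v)$ and $(v,u)$ lie on a common directed cycle of $H^+$, so some strong component of $H^+$ contains a circuit of length two, and Theorem \ref{main-bipartite} then rules out an $\F$-free ordering.

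For $(3)\Rightarrow(2)$, I would assume that $H$ has no invertible pair and, by Theorem \ref{main-bipartite}, reduce to proving that no strong component of $H^+$ contains a circuit; equivalently, that the Algorithm for $\BORD_4$ never closes a circuit. The argument is the one already carried out in the proof of Theorem \ref{main-bipartite} for the set $\{\{12',21'\},\{12',21',22'\}\}$, reused almost verbatim. Suppose the algorithm closed a circuit $(x_0,x_1),\dots,(x_n,x_0)$ for the first time upon adding a component $S$ with $(x_n,x_0)\in S$; we may take this circuit shortest and, since $n>1$, the component $S$ non-trivial, so $(x_n,x_0)$ both dominates and is dominated by a pair $(y_n,y_0)$. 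The standard domination bookkeeping forces $x_0y_0,x_ny_0\in E(H)$ while $x_jy_0,x_jy_n\notin E(H)$ for the ``wrong'' indices $j$; iterating around the circuit produces independent edges $x_0y_0,\dots,x_ny_n$ together with the set $X$ of vertices adjacent to all $x_i$ or to all $y_i$. Now $X$ must be complete bipartite, for otherwise $H$ contains a six-cycle and hence an invertible pair, contradicting the hypothesis; consequently $H-X$ splits into distinct components $H_0,\dots,H_n$ with $x_i,y_i\in H_i$, the component of $H^+$ containing $(x_i,x_{i+1})$ consists of all pairs from $H_i$ to $H_{i+1}$, and both $S$ and $\overline S$ are green, so $\overline S$ could have been added in place of $S$ without closing a circuit, contradicting the choice of $S$. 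Hence no circuit is ever created, no strong component of $H^+$ has one, and an $\F$-free ordering exists.

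The only genuinely new ingredient relative to the co-circular-arc case is the third pattern $\{11',12',21'\}$, and I expect verifying its effect to be the main obstacle. It adds to $H^+$ extra arcs triggered whenever two vertices from one part and two from the other, in the prescribed order, induce the corresponding $P_4$. At every step of the argument above where we concluded that some pair dominates nothing outside the current set $D$, or that a pair $(x_i,x_{i+1})$ can lie only in the ``block'' of pairs from $H_i$ to $H_{i+1}$, one must enlarge the case analysis by the $P_4$'s coming from $\{11',12',21'\}$ and check that none of these new arcs connects such a pair to anything outside its block, so that ``$S$ and $\overline S$ are green'' survives. This is routine pattern-chasing rather than anything conceptually new; once it is dispatched, the six-cycle/invertible-pair contradiction and the swap of $\overline S$ for $S$ go through unchanged, which completes the proof.
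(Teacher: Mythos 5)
Your proposal follows the paper's own proof essentially verbatim: the equivalence (1)$\Leftrightarrow$(2) is quoted from \cite{gregory} (via the min--max ordering reformulation), and (2)$\Leftrightarrow$(3) is obtained from Theorem~\ref{main-bipartite} together with the observation that, in the absence of an invertible pair, the Algorithm for $\BORD_4$ never closes a circuit, reusing the correctness argument given for $\F=\{\{12',21'\},\{12',21',22'\}\}$. The additional verification you flag for the third pattern $\{11',12',21'\}$ is exactly the step the paper itself dispatches with ``it is easy to see,'' so your treatment matches (indeed slightly exceeds) the level of detail of the published argument.
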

\begin{proof} It was noted in \cite{gregory} that $H$ admits an $\F$-free
ordering if and only if $H$ is a bipartite permutation graph
(proper interval bigraph). (In \cite{gregory} $\F$-free orderings
are described by an equivalent notion of min-max orderings.)
Therefore (1) and (2) are equivalent. It is easy to see that if no
strong component of $H^+$ contains a circuit of length two, i.e.
if $H$ has no invertible pair, then the Algorithm for $\BORD_4$
does not create a circuit. Therefore (2) and (3) are equivalent
and hence the theorem is proved. \qed \end{proof}

\section{Remarks and conclusions}

As noted earlier, Duffus, Ginn, and R\"{o}dl have found many examples
of NP-complete problems $\ORD(\F)$; in fact if $\F$ consists of a
single ordered pattern, they offered strong evidence that
$\ORD(\F)$ may be NP-complete as soon as the pattern is
2-connected. We offer just two simple examples to illustrate some
NP-complete cases.
\begin{proposition}
For every $k \ge 4$ there exists a set $\F \subseteq \F_k$ such that
$\ORD(\F)$ is NP-complete.
\end{proposition}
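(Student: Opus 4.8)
The plan is to prove NP-completeness by a reduction from a known NP-complete problem, most naturally 3-colorability (or, after a moment's thought, a betweenness-type or hypergraph-coloring problem), exploiting the fact that an ordered pattern on $k\ge 4$ vertices is expressive enough to simulate a combinatorial constraint. First I would fix $k=4$ (larger $k$ then follows by padding patterns with isolated or universal extra vertices that impose no real restriction, or simply by taking the same $\F$ viewed inside $\F_k$ together with a trivial reduction). Membership in NP is immediate: an $\F$-free ordering is a polynomial-size certificate checkable in time $O(n^k)$. So the whole content is NP-hardness for a well-chosen $\F\subseteq\F_4$.

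The cleanest route I would take is to pick $\F$ so that $\ORD(\F)$ becomes equivalent to a classic ordering-with-constraints problem already known to be NP-complete — for instance the \textsc{Betweenness} problem, or \textsc{Total Ordering}, or the problem of finding a linear layout avoiding a forbidden triple pattern on a hypergraph. Concretely, I would look for a 4-vertex pattern $F$ (or a small set $\F$) whose "occurs as induced ordered subgraph" condition, when imposed on a carefully constructed graph $H$, forces exactly the constraint "for each hyperedge/gadget, not all three of these relative orders occur simultaneously." The standard trick (used by Duffus–Ginn–Rödl and in the betweenness literature) is: encode each variable or element of the source instance as a vertex of $H$, add gadget vertices whose adjacencies to the element-vertices are rigged so that the only way to avoid an occurrence of the pattern $F$ among {gadget vertex, two element vertices} is to place the two element vertices in a prescribed relative order; chaining such gadgets then realizes an arbitrary acyclic system of order constraints, and a non-realizable (cyclic) system yields a NO instance. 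Since the pattern has four vertices, I have one extra vertex beyond the three that a $3$-pattern constraint would involve, and I would use that fourth vertex as a "switch" that activates the constraint only in the intended configuration, which is precisely what gives enough expressive power to go beyond the polynomial class of $\ORD_3$.

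The key steps, in order, would be: (1) state the source NP-complete problem and recall its hardness; (2) exhibit the explicit set $\F\subseteq\F_4$ (ideally a single pattern, to match the spirit of the Duffus–Ginn–Rödl conjecture) together with the gadget — i.e., given an instance $I$ of the source problem, describe the vertex set of $H=H(I)$, the edge set, and the bijection between "legal orderings of $H$ avoiding $\F$" and "solutions of $I$"; (3) prove the forward direction: a solution of $I$ produces an $\F$-free ordering of $H(I)$; (4) prove the backward direction: any $\F$-free ordering of $H(I)$, restricted to the element-vertices, is a solution of $I$ — this is where one checks that every gadget actually enforces its constraint and that no unintended occurrence of a pattern in $\F$ can arise from mixing vertices of different gadgets; (5) observe the reduction is polynomial-time and conclude; (6) extend to all $k>4$ by the padding remark.

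The main obstacle I expect is step (4), and within it the "no unintended pattern" bookkeeping: because $\F$-freeness is a global condition on \emph{all} $k$-subsets of $V(H)$, when I wire many gadgets together I must ensure that a triple or quadruple of vertices taken from two or more gadgets never accidentally induces a forbidden pattern in some valid ordering, which would spuriously restrict the ordering and break the correspondence. Managing this typically forces the gadgets to be "sparsely connected" (e.g. attaching pendant or near-pendant gadget vertices, or separating gadgets by independent sets / cliques so that cross-gadget quadruples induce only "safe" ordered subgraphs not in $\F$), and verifying safety is a finite but fiddly case analysis over the few isomorphism types of induced ordered $4$-vertex subgraphs. A secondary, lesser obstacle is choosing $\F$ small — ideally a single connected (even $2$-connected) pattern, to align with the conjecture quoted just before the proposition — which may require a cleverer gadget or accepting a slightly larger $\F$; since the statement only asks for \emph{some} $\F\subseteq\F_k$, I would be willing to use a two- or three-element $\F$ if that simplifies the safety analysis, and only afterwards try to compress it to a single pattern.
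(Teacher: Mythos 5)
Your proposal is a strategy outline rather than a proof: the entire mathematical content of the statement is concentrated in your steps (2)--(4) --- exhibiting an explicit $\F\subseteq\F_4$, the gadget graph $H(I)$, and the two directions of correctness --- and none of these is actually supplied. You say you ``would look for'' a suitable pattern and gadget, and you yourself flag the backward direction (ruling out unintended occurrences of forbidden patterns across gadgets) as the main unresolved obstacle. Since no concrete $\F$ and no concrete reduction are given, and the part you identify as hardest is exactly the part left open, there is a genuine gap: as written, nothing is proved. A secondary flaw is the padding remark for $k>4$: adding only ``isolated or universal'' extra vertices to a $4$-pattern does not work, because in a host graph the extra vertices extending an occurrence of the $4$-pattern can have arbitrary adjacencies to it; to lift a $4$-pattern to $\F_k$ one must forbid \emph{all} $k$-patterns containing it as an induced ordered subpattern (and handle inputs with fewer than $k$ vertices separately).

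It is also worth noting that the paper's proof shows all of this machinery is unnecessary. It takes $\F$ to be the set of all $k$-patterns that contain $\{12,23,\dots,(k-1)k\}$ as a subset, i.e.\ it forbids any monotone (not necessarily induced) path on $k$ vertices in the ordering. Then, for an arbitrary input graph $H$, an $\F$-free ordering exists if and only if $H$ is $(k-1)$-colorable: a proper $(k-1)$-coloring ordered by color classes avoids every such pattern, and conversely a greedy layering of an $\F$-free ordering (repeatedly taking the vertices with no earlier neighbor among the remaining vertices) produces $k-1$ independent sets covering $V(H)$, in the spirit of Gallai--Roy. Hence the reduction from $(k-1)$-colorability is the identity map on graphs, with no gadgets and no cross-gadget case analysis, and NP-hardness holds for every $k\ge 4$ at once. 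If you want to salvage your approach (e.g.\ to get a single, possibly $2$-connected, forbidden pattern in line with the Duffus--Ginn--R\"odl conjecture), you would have to actually construct and verify the gadget you describe; for the proposition as stated, the paper's coloring argument is both simpler and complete.
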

\begin{proof} We show that if $\F$ is a set of all forbidden patterns on $k$
vertices where each of them contains $\{12,23,34,\dots, (k-1)k\}$
as a subset, then $\ORD_k$ is NP-complete. We reduce the problem
to $(k-1)$-colorability. Let $H$ be an arbitrary graph. If $H$ is
$(k-1)$-colorable with color classes $X_1,X_2,\dots,X_{k-1}$, then
we put all the vertices in $X_i$ before all the vertices in
$X_{i+1}$, $1 \le i \le k-2$. This way we obtain an ordering of
the vertices and it is clear that it does not contain any of the
forbidden patterns in $\F$.

Now suppose there is an ordering $v_1,v_2,\dots,v_n$ of the
vertices in $H$ without seeing any forbidden pattern in $\F$. Let
$X_1$ be the set of vertices $v_j$, $1 \le j \le n$ that have no
neighbor before $v_j$. Now for every $2 \le i \le k-1$, let $X_i$
be the set of vertices $v_j$, $1 \le j \le n$ from set $V(H)
\setminus ( \cup_{\ell=1}^{\ell=i-1} X_{\ell})$ that have no
neighbor before $v_j$. Note that by definition each $X_i$, $1 \le
i \le k-1$ is an independent subset of $H$. Moreover $V(H)=
\cup_{\ell=1}^{\ell=i-1} X_{\ell}$ as otherwise we obtain $k$
vertices $u_1 < u_2 \dots < u_k$ where $u_ju_{j+1}$, $1 \le j \le
k-1$ is an edge of $H$ and hence we find a forbidden pattern from
$\F$. Thus $H$ is $(k-1)$-colorable.  \qed \end{proof}

We note that in Damaschke's paper \cite{dama} the complexity of
$\ORD(\F)$ was left open for $\F= \{12,23,34\}$. (However, other
folklore solutions for this particular case have been reported
since.)

In the case of bipartite graphs, we offer the following simple
example.

\begin{proposition}
$\BORD(\F)$ is NP-complete for set $\F=\{\{11',31',51'\} \}$.
\end{proposition}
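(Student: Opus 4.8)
The plan is to first reformulate the problem, and then reduce from \textsc{Hamiltonian Path}. For the reformulation: since the unique pattern of $\F$ has only one vertex ($1'$) in the second part of its bipartition, the ordering of the second part $W$ of $H$ is irrelevant, and the pattern $\{11',31',51'\}$ occurs under an ordering of the first part $U$ precisely when some $v\in W$ has three neighbours $u_1<u_3<u_5$ and two non-neighbours $u_2,u_4$ in $U$ with $u_1<u_2<u_3<u_4<u_5$. Reading, for a fixed $v$, the membership in $N(v)$ of the vertices of $U$ as a $0/1$-word along the ordering of $U$, this says exactly that the word contains $10101$ as a scattered subword; and a $0/1$-word avoids $10101$ as a scattered subword if and only if its $1$'s form at most two maximal runs. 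Hence a linear order of $U$ is $\F$-free if and only if for every $v\in W$ the set $N(v)$ is a union of at most two intervals of the order. In particular $\BORD(\F)$ lies in NP: guess the ordering of $U$ and check, for each of the at most $|W|$ neighbourhoods, that it breaks into at most two blocks. Phrased in matrix language, the question is whether the columns of the bipartite adjacency matrix of $H$ (columns indexed by $U$) can be permuted so that every row has at most two blocks of $1$'s — the ``$2$-consecutive-ones'' recognition problem, which is known to be NP-complete; so in fact NP-completeness of $\BORD(\F)$ already follows, and what remains is to exhibit the reduction.

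For the reduction I would take a connected graph $G=(V,E)$ and build $H=(U,W)$ so that the $\F$-free orderings of $U$ correspond, after deleting auxiliary vertices, to the Hamiltonian paths of $G$. Concretely, $U$ would consist of a short block of private vertices for each $v\in V$, together with auxiliary ``spacer'' vertices, and $W$ would contain vertices with large neighbourhoods: because such a neighbourhood must break into at most two intervals, it forces most of $U$ into two clumps, and a suitable family of such constraints can be used to chain the blocks of the vertices of $G$ into a single path in the order. For each non-edge of $G$ one then wants a $W$-vertex whose neighbourhood is a union of at most two intervals exactly when the blocks of the two endpoints are \emph{not} adjacent in the order, so that the only admissible orderings traverse an edge of $G$ at every step; reading off the blocks then yields a Hamiltonian path, and conversely a Hamiltonian path, read with the blocks in path order and the spacers parked appropriately, gives an $\F$-free order. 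Since \textsc{Hamiltonian Path} is NP-complete and the construction is polynomial, this gives NP-completeness.

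The step I expect to dominate the proof is the backward direction — showing that \emph{every} $\F$-free order of $U$ genuinely encodes a Hamiltonian path, i.e. ruling out spurious orderings. The difficulty is structural: ``at most two intervals'' constraints are effective at keeping designated vertices together but weak at keeping vertices apart, since a long run of $1$'s may be split in two without violating anything. So the crux is designing the gadgets (and choosing how many copies of each to use) so that the two intervals of each forced neighbourhood cannot be positioned to simulate an unintended arrangement of the blocks and spacers; establishing this will require a careful case analysis of how those two intervals may sit relative to the blocks, in the same spirit as the case analysis used to prove correctness of the $\ORD_3$ algorithm. Membership of $\BORD(\F)$ in NP, by contrast, is immediate from the reformulation above, as is the easy direction of the reduction.
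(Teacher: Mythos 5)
Your first paragraph is already the paper's proof: the paper observes exactly the same equivalence between $\F$-free orderings and permuting the columns of the bipartite adjacency matrix so that every row has at most two blocks of consecutive $1$'s, and then concludes NP-completeness by citing the Goldberg--Golumbic--Kaplan--Shamir hardness of that column-ordering problem \cite{golberg}, just as you do. Your subsequent Hamiltonian-path reduction is therefore unnecessary, and as written it is only a sketch (the gadgets that would rule out spurious orderings are never constructed), so it should be dropped rather than relied upon; the argument is complete once you state the reformulation and invoke the known result.
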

\begin{proof} Let $M$ be a $m \times n$ matrix with entities $0$ and $1$.
Finding an ordering of the columns such that in each row there are
at most two sequences of consecutive $1$'s has been shown to be
NP-complete in \cite{golberg}. Now from an instance of a matrix
$M$ we construct a bipartite graph $H=(A,B,E)$ where $A$
represents the set of columns and $B$ represents the set of rows
in $M$. There is an edge between $a \in A$ and $b \in B$ if the
entry in $M$, corresponding to row $a$ and column $b$ is $1$. Now
if we were able to reorder to columns with the required property
we would be able to find the ordering of $H$ without seeing the
forbidden pattern in $F$ and vice versa. \qed \end{proof}

There are natural polynomial problems $\ORD(\F)$ for sets $\F$ of
larger patterns. For instance, strongly chordal graphs are
characterized as $\ORD(\F), \F \subseteq \F_4$, in \cite{martin}. In
fact, an algorithm similar to the one presented here can be
developed for this case. (We will be happy to communicate the
details to interested readers.)

There is a natural version of $\ORD(\F)$ for digraph patterns
$\F$. We are given an input digraph $H$ and a set $\F$ of
forbidden digraph patterns (each digraph pattern is an ordered
digraph). The decision problem asking whether an input digraph
admits an ordering without forbidden patterns in $\F$ is denoted
by $\DORD(\F)$. Let $\D_k$ denote the collection of sets $\F$ of
digraph patterns with $k$ vertices. The problem $\DORD_k$ asks,
for an input $\F \subseteq \D_k$ and a digraph $H$, whether or not $H$
has an $\F$-free ordering.

The algorithms in \cite{adjust-interval,arash} illustrate two
cases where $\DORD(\F)$ problems have been solved by algorithms
similar to the algorithm for $\ORD_3$, and the obstructions
characterized as invertible pairs. (The problems in
\cite{adjust-interval,arash} are not presented as $\ORD(\F)$, but
they can easily be so reformulated.) We believe many other digraph
problems can be similarly handled. In fact we wonder whether the
problem $\DORD(\F)$ is polynomial for every set $\D \in \D_3$ .

We conjecture that for every set $\F$ of forbidden patterns,
$\ORD(\F)$ is either polynomial or NP-complete.

\end{document}